\newtheoremstyle{thm}{10pt}{10pt}{\itshape}{}{\bfseries}{.}{.5em}{}
\theoremstyle{thm}
\newtheorem{theorem}{Theorem}[section]
\theoremstyle{thm}
\theoremstyle{thm}
\newtheorem{prop}[theorem]{Proposition}
\theoremstyle{thm}
\theoremstyle{thm}
\numberwithin{equation}{section}
\newtheoremstyle{def}{10pt}{10pt}{}{}{\bfseries}{.}{ }{\thmname{#1}\thmnumber{ #2}\thmnote{ (#3)}}
\theoremstyle{def}
\newtheorem{definition}[theorem]{Definition}
\newtheoremstyle{rem}{10pt}{10pt}{}{}{\itshape \bfseries}{.}{.5em}{}
\theoremstyle{rem}
\theoremstyle{rem}
\newcommand{\Z}{\mathds{Z}}
\newcommand{\R}{\mathds{R}}
\newcommand{\C}{\mathds{C}}
\newcommand{\Q}{\mathds{Q}}
\newcommand{\N}{\mathds{N}}
\newcommand{\slz}{\mathrm{SL}(2,\mathds{Z})}
\newcommand{\gam}{\begin{pmatrix}
		a & b \\ c & d
	\end{pmatrix}}
\newcommand{\hh}{\mathds{H}}
\DeclareMathOperator{\Img}{Im}
\DeclareMathOperator{\Rel}{Re}
\DeclareMathOperator{\Tr}{Tr}
\newcommand{\del}{\partial}
\def\CC {{\cal C}}
\def\CH {{\cal H}}
\def\CM {{\cal M}}
\def\CN {{\cal N}}
\def\CH {{\cal H}}
\def\CT {{\cal T}}
\def\IM{\mathds{M}}
\def\IN{\mathds{N}}
\def\IR{{\mathds{R}}}
\def\IZ{{\mathds{Z}}}
\begin{document}
\title{\boldmath Mock modularity of twisted index in CHL models}

% Authors

\author[a]{Nabamita Banerjee,}
\author[a,b]{Vedant Bhutra}
\author[c]{and Ranveer Kumar Singh}

\affiliation[a]{Department of Physics, Indian Institute of Science Education and Research Bhopal,\\ Bhopal Bypass Road, Bhopal MP 462066, India}
\affiliation[b]{Department of Theoretical Physics, Tata Institute of Fundamental Research, \\
1 Homi Bhabha Road, Mumbai MH 400005, India}
\affiliation[c]{NHETC and Department of Physics and Astronomy, Rutgers University,\\ 126
Frelinghuysen Rd., Piscataway NJ 08855, USA\\}

\emailAdd{nabamita@iiserb.ac.in}
\emailAdd{vedant.bhutra@tifr.res.in}
\emailAdd{ranveersfl@gmail.com}

\abstract{We study the twisted partition function of quarter BPS states in CHL models and show that for a large class of single-centered black holes, the degeneracy of microstates is given by the Fourier coefficients of mock Jacobi forms. Our analysis is a continuation of the programme initiated by Dabholkar, Murthy and Zagier (DMZ) for $1/4$-BPS dyons in $\mathcal{N} = 4$ string theory and further extended by Bhand, Sen and Singh (BSS) to quarter BPS states in CHL models. We also present the multiplicative lift construction of the partition function and comment on the additive lift of the same.}

\maketitle
\section{Introduction}
Black holes are one of the most interesting outcomes of general relativity as they present perfect laboratories to test quantum aspects of gravity. There are various models in string theory that are useful for studying the microscopic physics of black holes. In particular, the four dimensional CHL models are hugely successful in this aspect. CHL models are type IIB string theory compactified on $\mathcal{M} \times S^1 \times \widetilde{S}^1$ where the compactification manifold $\mathcal{M}$ is either K3 or $T^4$. The resultant theory is a four dimensional $\mathcal{N}=4$ supersymmetric string theory \cite{chaudhuri1995maximally,chaudhuri1995moduli}. In its weak coupling limit, ie. in its simplest form, a sub-sector of the theory can be thought of as product of the following mutually non-interacting microscopic systems: 
\begin{enumerate}
\item a Kaluza-Klein monopole charge associated with circle $\widetilde{S}^1$,
\item a D5-brane charge wrapped on $\mathcal{M} \times S^1$,
\item $Q_1$ units of D1-brane charge wrapped on $S^1$, and 
\item $n$ units of left-moving momentum along $S^1$
and $J$ units of momentum along $\widetilde{S}^1$. 
\end{enumerate}
In the strong coupling limit, the above configuration gives rise to a dyonic black hole carrying electric and magnetic charge vectors $\vec{Q}$ and $\vec{P}$ such that \cite{sen2008black}, \begin{equation}
		Q^2 = 2n,  \qquad P^2 = 2 Q_1, \qquad Q \cdot P = J, \qquad n, Q_1, J \in \mathds {Z}.\nonumber
        \end{equation}
	As we describe next, one gets further exotic configurations by orbifolding the above system. Let us consider a special subspace of the moduli space of $\mathcal{M}$, where the theory has a $\mathds{Z}_M \times \mathds{Z}_N$ symmetry. We consider the following two situations:

	{ \bf Case I. Twisted-twined:} We modify the configuration by taking the left-moving momentum to be $n/M$ along $S^1$ and only considering a  $\mathds{Z}_M $ orbifold of the theory leaving out the $\mathds{Z}_N$ symmetry. We then count the states invariant under the $\mathds{Z}_N$ symmetry.

	{\bf Case II. Doubly twined:} We modify the configuration by taking the left-moving momentum to be $n/M$ along $S^1$ and $J/N$ along $\widetilde{S}^1$ and considering the $\mathds{Z}_M \times \mathds{Z}_N$ orbifold of this theory.
    
In both of the above cases, the $\mathds{Z}_M \times \mathds{Z}_N$ symmetry commutes with the 16 supersymmetry generators of the theory. Furthermore, at the zero temperature, they correspond to quarter BPS states and hence counting of the dyonic black hole degeneracies $d(\vec{Q},\vec{P})$, corresponding to particular charges $(\vec{Q},\vec{P})$ simply boils down to the computation of the supersymmetric index of the corresponding BPS states. The supersymmetry of the system ensures that the index computed in the weak coupling limit is invariant of the string coupling, and therefore the results hold even at strong coupling. Following usual statistical physics, the black hole entropy $S_\text{BH}$ can be then computed from the logarithm of the degeneracies as
\begin{equation}
    S_\text{BH}(\vec Q,\vec P)= \ln d(\vec{Q},\vec{P}). \nonumber
\end{equation}
The quantity computed agrees with the Bekenstein-Hawking entropy of black holes in the large charge limit \cite{strominger1996microscopic,dijkgraaf1997counting,jatkar2006dyon,sen2008black}. The equivalence holds beyond the large charge limit to a finer precession in inverse and exponential suppression in charges, and gives a good understanding of black hole microstates. The degeneracy functions of single-centered dyonic black holes appearing in these models have been successfully computed\footnote{Barring the doubly twisted case, where exact expressions remain to be obtained.}. Moreover, many interesting aspects of the degeneracy function that was predicted from AdS$_2$/CFT$_1$ correspondence have been explicitly tested with these models \cite{Mandal:2010cj}. We list below the prominent results that are already known for the single-centered quarter BPS dyonic black holes of this system.

\begin{itemize}
    \item The degeneracy $d(\vec{Q},\vec{P})$ of $\mathds{Z}_N$-invariant states in $\mathds{Z}_M $ CHL models is given in terms of a specific Fourier transform of the inverse of a function $\Phi^{(M,N)}_k$, where the Fourier integral is defined over a specified contour in the moduli space \cite{sen2010discrete}. We shall present detailed expression of the function in section \ref{sec:twist_twined_par_fu} of the paper.
    \item The function $\Phi^{(M,N)}_k $ is a Siegel modular form of weight $k$ for a subgroup $G$ of\footnote{We follow the convention that $\mathrm{Sp}(2n,\mathds{F})$ is the group of $2n \times 2n$ matrices over $\mathds{F} \in \{\C, \R, \Z\}$ that preserve the symplectic form. Some older references, such as \cite{sen2010discrete}, denote the same group by $\mathrm{Sp}(n,\mathds{F})$.} $\mathrm{Sp}(4,\mathds {Z})$. Its form has been obtained by multiplicative lifts of appropriate seed Jacobi forms \cite {dijkgraaf1997counting,david2006product,jatkar2006dyon} for all values of $(M,N)$ and by additive lift for $(M,N)=(2,2)$, $(2,4)$, and $(4,4)$ \cite{govindarajan2011bkm}. For $N=1$, the Fourier-Jacobi coefficients of ${\Phi_k^{(M,1)}}^{-1}$ were shown to be meromorphic Jacobi forms and a canonical decomposition of these meromorphic Jacobi forms as a sum of a mock Jacobi form and an Appell-Lerch sum was proved in {\cite {dabholkar2012quantum, bhand2023mock}}.
    \item The quantum degeneracies of single-centered black holes arising in $\Z_M$ CHL models are Fourier coefficients of this mock Jacobi form. The degeneracies of multi-centered black holes which decay upon wall-crossing are given by the  Appell-Lerch sum {\cite {dabholkar2012quantum, bhand2023mock}}. Interestingly, this decomposition does not work for all the charge vectors corresponding to a single-centered black hole. There are some special ones for which the degeneracy cannot directly be related to the Fourier coefficients of mock Jacobi forms without using $S$-duality {\cite{bhand2023mock}}.
    \item The mock modularity of the generating functions of certain single-centered black hole degeneracies is useful in obtaining exact formulas for degeneracies in terms of Rademacher sums, as shown for the $(M,N)=(1,1)$ case in \cite{cardoso2005asymptotic}. One can also use the methods of number theory to get a better handle on the growth of the degeneracy using the mock modular properties. For $(M,N)=(1,1)$ case, the mock modular behaviour was useful in proving Sen's positivity conjecture \cite{sen2011black}, i.e.,
\begin{equation}
    d(n,\ell,m)>0,\qquad \text{for } m,n>0,\,4mn-\ell^2>0.
\end{equation}
This was proved for $m=1,2$ in \cite{Bringmann:2012zr} and for $m\geq 3$ in \cite{rossello2024immortal}.  
\end{itemize}

The above success certainly demands further investigations for the twisted-twined and the doubly twined case. In the present paper, we restrict ourselves to the former, ie. we consider $\mathds{Z}_M$ orbifolds of this model leaving behind a $\mathds{Z}_N$-symmetry. In the case of $\mathcal{M}=\text{K}3$, allowed $\Z_M \times \Z_N$ symmetries can be worked out using Nikulin's classification of symplectic automorphisms of K3 surfaces. We refer the readers to \cite{chaudhuri1996type,aspinwall1996some} for an extensive list of such symmetries. 

The formulation of the problem is enumerated as follows :
\begin{enumerate}
    \item We begin with the partition function of the twisted-twined sector given in \cite{sen2010discrete}.
    \item Next we find out the transformation properties of the partition function. This is a non-trivial step and we have used the threshold integral representation of the partition function to find the transformation properties. 
    \item Then we apply and extend the results of \cite{bhand2023mock} to prove the mock modularity of finite parts of the Fourier-Jacobi coefficients of the partition function. 
    \item Finally, we relate the degeneracy of the single-centered black holes to mock Jacobi forms.  
\end{enumerate}

Similar to the case studied in \cite{bhand2023mock}, we find that there exists a charge sector for which the degeneracy of the single-centered black holes, appearing in the twisted CHL model, can not be described in terms of a mock Jacobi form. 

The paper is arranged as follows: in section \ref{sec:2}, we present the mathematical prerequisites, as well as summarize (without proofs) the most general versions of the important theorems of \cite{dabholkar2012quantum, bhand2023mock}. The readers who are familiar with the same may skip this section. Section \ref{sec:3} contains the main results of the paper, where we found the mock Jacobi forms that describe the degeneracies of the $\Z_N$-invariant single-centered black holes, that arises in the $\Z_M$ CHL models. In section \ref{sec:4} we present for $\CM=\text{K}3$ the dyon partition function as a multiplicative lift of an appropriate seed function and comment on the existence of an additive lift of the same. Finally we end the paper with a discussion on open problems in section \ref{sec:5}.

\section{Mathematical preliminaries}\label{sec:2}
In this section, we define a few mathematical objects relevant for this paper. For more details and some proofs, we refer the reader to \cite{dabholkar2012quantum, bhand2023mock}. Let $\hh$ be the upper half of the complex plane $\C$ (excluding the real line).

\paragraph{Conventions.} For the rest of the paper, we choose the following conventions,
\begin{align}
    \tau, \sigma &\in \hh, \qquad z \in \C, \\
\bold{e}^x \coloneqq e^{2 \pi i x}, &\qquad q \coloneqq \bold{e}^\tau, \qquad y \coloneqq \bold{e}^z, \qquad p \coloneqq \bold{e}^\sigma.
\end{align}

\subsection{Jacobi and Siegel modular forms}
\begin{definition}[Jacobi form on $\Gamma \ltimes (M\Z \times \Z)$ of weight $k$ and index $m/M$]
\label{jacobi}
	A holomorphic function $\varphi_{k,\frac{m}{M}} : \hh \times \C \to \C$ that satisfies the following:
	\begin{enumerate}
		\item \textit{modularity condition} (for all $\gamma = \gam \in \Gamma$): \begin{equation}
 	 \varphi_{k,\frac{m}{M}}\bigg(\frac{a \tau + b}{c \tau + d}, \frac{z}{c \tau + d} \bigg) = (c \tau + d)^k \, \bold{e}^{\tfrac{m c z^2}{M(c \tau + d)}} \, \varphi_{k,\frac{m}{M}}(\tau, z), \end{equation}
 	\item \textit{elliptic condition} (for $\lambda \in M\Z$, $\mu \in \Z$): \begin{equation}
	\label{ellipticcond}
 		\varphi_{k,\frac{m}{M}}(\tau, z + \lambda \tau + \mu) = \bold{e}^{-\frac{m}{M}(\lambda^2 \tau + 2 \lambda z)} \, \varphi_{k,\frac{m}{M}}(\tau, z),
 	\end{equation}
	\end{enumerate}
     along with some cusp condition, where\footnote{One may also define Jacobi forms of half-integral weight. In that case, $\Gamma \subset \Gamma_0(4)$ and some additional constants appear in the modularity condition.} $k \in \Z$, $m, M \in \Z_{>0}$ and $\Gamma < \slz$. 
\end{definition}

Henceforth, we assume $\Gamma \in \{\Gamma_0(N), \Gamma_1(N)\}$, where
\begin{align}
	\label{heckegrps}
		\Gamma_1(N) &\coloneqq \bigg\{\gam \in \slz : \, \begin{matrix} c \equiv 0 \pmod{N} \\ a \equiv d \equiv 1 \pmod{N} \end{matrix} \bigg\}, \\
		\Gamma^1(N) &\coloneqq \bigg\{\gam \in \slz : \, \begin{matrix} b \equiv 0 \pmod{N} \\ a \equiv  d \equiv 1 \pmod{N} \end{matrix} \bigg\}.
\end{align} are congruence subgroups of $\slz$, sometimes called \textit{Hecke subgroups}.\footnote{$\Gamma_0$ and $\Gamma^0$ are larger Hecke subgroups defined similarly, albeit without the restriction on $a$ and $d$.}
Also,
\begin{equation}
    \Gamma(M,N):=\Gamma_1(N)\cap\Gamma^1(M).
\end{equation}

A Jacobi form as defined above for $\Gamma \in \{\Gamma_0(N), \Gamma_1(N)\}$ is periodic in both $\tau, z$ with period $1$. Therefore it admits a Fourier expansion of the form,
\begin{equation}
\label{Jacobifourierexp}
		\varphi_{k,\frac{m}{M}}(\tau,z) = \sum_{n,r \, \in \, \Z} c(n,r) \, q^n y^r.
\end{equation}

In physics, we generally encounter \textit{weak Jacobi forms} (defined by $c(n,r) = 0$ for $n<0$ in the above expansion), as they have exponential growth of their Fourier coefficients, which relates to the degeneracy of microstates.

We also define a useful function, called the \textit{index $m$ $\vartheta$-function}
\begin{equation}
		\vartheta_{m,\ell}(\tau,z) \coloneqq \sum_{\substack{r \, \in \, \Z \\ r \, \equiv \, \ell (\mathrm{mod} \, 2m)}} q^{r^2/4m} y^r = \sum_{n\, \in \, \Z} q^{(\ell + 2mn)^2 / 4m} y^{\ell + 2mn},
\end{equation} which is a (component of a vector-valued) Jacobi form of weight $\frac{1}{2}$ and index $m \in \Z$. 
%Note that $\ell \in \Z$ is the length of the vector.

Then, one can show that Jacobi forms have a \textit{theta expansion} in terms of these $\vartheta$-functions given by \cite{eichler1985theory, bhand2023mock}
\begin{equation}
\label{thetaexp}
	\varphi(\tau, z) = \sum_{\ell \in \Z/2m\Z} h_\ell(\tau) \, \widetilde{\vartheta}_{\frac{m}{M},\ell}(\tau,z),
\end{equation} where,
\begin{equation}
    \widetilde{\vartheta}_{\frac{m}{M},\ell}(\tau,z) \coloneqq \vartheta_{m,\ell}(M\tau,z).
\end{equation}

The coefficients $h_\ell (\tau)$ also transform as a vector-valued weakly holomorphic modular form $h \coloneqq (h_1, \ldots, h_{2m})$ of weight $k - \frac{1}{2}$ with respect to $\Gamma (4mN)$.
%\cv{Add cite} 
By Cauchy's theorem, we have the following integral formula:
\begin{equation}
\label{thetacoeff}
	h_\ell (\tau) = q^{-M\ell^2/4m} \int_{z_0}^{z_0 + 1} \varphi(\tau,z) \, y^{-\ell}\, dz,
\end{equation} where $z_0 \in \C$ is arbitrary, and the left hand side is independent of $z_0$ and the path of integration.

We also define \textit{Jacobi-Erderlyi $\vartheta$-functions},
\begin{align}
	\vartheta_1(\tau, z) &= -i \sum_{n \in \Z + \frac{1}{2}}(-1)^{n-\frac{1}{2}} y^n q^\frac{n^2}{2}, \\
	\vartheta_2(\tau, z) &= \sum_{n \in \Z + \frac{1}{2}} y^n q^\frac{n^2}{2}, \\
	\vartheta_3(\tau, z) &= \sum_{n \in \Z} y^n q^\frac{n^2}{2}.
\end{align}

Then, one can construct
\begin{equation}
\label{def:phi01}
    \varphi_{0,1}(\tau, z) = 4 \bigg( \frac{\vartheta_2^2(\tau, z)}{\vartheta_2^2(\tau, 0)} + \frac{\vartheta_3^2(\tau, z)}{\vartheta_3^2(\tau, 0)} + \frac{\vartheta_4^2(\tau, z)}{\vartheta_4^2(\tau, 0)} \bigg),
\end{equation} which is the unique Jacobi form of weight 0 and index 1 for $\slz$. This Jacobi form is (up to a scaling factor) the elliptic genus of K3.

The above discussion on Jacobi forms can be generalized to meromorphic functions. It is shown in \cite{dabholkar2012quantum, bhand2023mock} that certain meromorphic Jacobi forms over $\Gamma \ltimes \Z^2$ can be decomposed into finite and polar pieces, where the finite part is a mixed mock Jacobi form and the polar part is constructed from Appell-Lerch sums. We shall describe this in detail in the next subsection.

For now, suppose $\varphi(\tau,z)$ is a \textit{meromorphic} Jacobi form of weight $k$ and index $m/M$ with respect to $\Gamma \ltimes \Z^2$ such that it has only simple or double poles, and only at $z$ belonging to some discrete subset of $\Q \tau + \Q$. We adopt the following notation for the position of the poles:
	$S(\varphi) \coloneqq \{s = (\alpha, \beta) \in \Q^2 : z = z_s = \alpha \tau + \beta \text{ is a pole of $\varphi$}\}$.

As $\varphi$ is meromorphic, it may not admit a theta decomposition where the coefficient $h_\ell(\tau)$ is independent of $z_0$ and the path of integration, as in (\ref{thetaexp}), as it could jump when the integration contour is deformed across a pole. Periodicity in $\ell$ is also typically destroyed. We get around this by defining,
\begin{equation}
    \label{merthetacoeff}
	h^\star _\ell(\tau) \coloneqq q^{M\ell^2 / 4m} \int_{\R / \Z} \varphi(\tau, z - \ell M \tau/2m) \, y^{-\ell} \, dz,
\end{equation} which is effectively fixing $z_0 = -\frac{\ell M}{2m} \tau$ in (\ref{thetacoeff}). Any poles in the path are avoided by taking averages of the integrals just above and below the poles. We shall quickly see how this is beneficial.

For more mathematical details on Jacobi forms, we refer readers to \cite{eichler1985theory}. We now turn our attention to Siegel modular forms. We define the genus-2 generalisation of $\hh$, the \textit{Siegel upper half plane} $\hh_2$ by
	\begin{equation}
    \label{siegelupperhalfplane}
		\hh_2 \coloneqq \bigg\{ \Omega = \begin{pmatrix}
			\tau & z \\ z & \sigma
		\end{pmatrix} : \tau, \sigma \in \hh, z \in \C, \,\det(\Img(\Omega)) > 0 \bigg\}.
	\end{equation} Note that we will follow this convention for $\Omega$ for the rest of the paper.

We now have all the ingredients to define,
\begin{definition}[Siegel modular form on $G < \mathrm{Sp}(4,\Z)$ of weight $k$] A holomorphic function $\Phi:\hh_2 \to \C$ such that,
	\begin{equation}
		\Phi(\Omega) \to \Phi\big((A\Omega + B)(C \Omega + D)^{-1}\big) = \det(C\Omega + D)^k \, \Phi(\Omega), \qquad \forall \begin{pmatrix}
	A & B \\ C & D
\end{pmatrix} \in G.
	\end{equation}
\end{definition}

Depending on $G$, the SMF satisfies periodicity in the three variables due to modular invariance. In particular, for $G = \mathrm{Sp}(4,\Z)$, $\tau, z, \sigma$ all have unit period and we have the Fourier expansion,
\begin{equation}
	\Phi(\Omega) = \sum_{\substack{n, m, r \in \Z\\m,n\geq 0 \\ r^2 \leq 4mn}} c(n,m,r) \, q^n p^m y^r.
\end{equation} 
The restriction $m,n\geq 0,r^2\leq 4mn$ in the sum over the indices is because of holomorphicity of $\Phi$. It may be restricted differently depending on the analytic properties of $\Phi$.

In particular, we may write,
\begin{equation}
	\Phi(\Omega) = \sum_{m} \phi^F_m(\tau, z) \, p^m,
\end{equation} where $\phi^F_m$ is easily shown to be a Jacobi form of weight $k$ and index $m$. Thus, this is called the \textit{Fourier-Jacobi expansion} of $\Phi$. The reason for the superscript $F$ will be made clear in the next subsection.

The prototypical example of a Siegel modular form is the \textit{Igusa cusp form},
\begin{equation}
	\label{igusa}
		\Phi_{10} (\Omega) \coloneqq q p y \prod_{\substack{m,n,\ell \in \Z :\\ m> 0, n\geq 0 \text{ or}, \\  m\geq 0, n> 0 \text{ or}, \\ m = n = 0, \, \ell < 0.}} (1 - q^n p^m y^\ell)^{2 c_0(4nm - \ell^2)},
	\end{equation} which is a weight 10 Siegel modular form, where $c_0$ are the Fourier coefficients of the $\varphi_{0,1}$ (defined in \ref{def:phi01})). For a more mathematically rigorous review, we refer readers to \cite{pitale2019siegel, bruinier20081}.

\subsection{Mock modular forms}
Finally, we turn to mock modular forms. These are holomorphic functions that \textit{almost} satisfy the modularity property, but not quite. Instead, they have a non-holomorphic \textit{completion}, that is modular. Mock-modular forms fave found numerous applications in number theory, combinatorics, moonshine and theoretical physics. A good review on the subject is \cite{folsom2017perspectives}, which we shall follow.

\begin{definition}[Pure mock modular form]
\label{puremock}
	A \textit{(weakly holomorphic) pure mock modular form of weight $k \in \frac{1}{2}\Z$} is the first member of the pair $(h,g)$ such that,
	\begin{enumerate}
		\item $h$ is a holomorphic function on $\hh$ with at most exponential growth at all cusps,
		\item $g$, called the \textit{shadow of $h$}, is a holomorphic modular form of weight $2-k$.
		\item $\widehat{h} \coloneqq h + g^*$, called the \textit{completion of $h$}, transforms like a holomorphic modular form of weight $k$.
	\end{enumerate}
\end{definition}

The function $g^*$ is called the \textit{non-holomorphic Eichler integral}, and is a solution to the differential equation,
\begin{equation}
	(4 \pi \Img(\tau))^k \, \frac{\del g^*}{\del \overline{\tau}} = - 2 \pi i \, \overline{g}.
\end{equation}

Being a modular form, $g$ has a Fourier expansion, given by $g(\tau) = \sum_{n \geq 0} a_n q^n$. Thus, we can fix $g^*$ by setting
\begin{equation}
g^*(\tau) = \begin{cases} 
      -\overline{a}_0 \dfrac{(4 \pi \Img(\tau))^{-k + 1}}{-k+1} + \sum_{n > 0} n^{k-1} \, \overline{a}_n \, \Gamma(1-k, 4 \pi n \Img(\tau)) \, q^{-n}, & k \neq 1 \\
      -\overline{a}_0 \ln(4 \pi \Img(\tau)) + \sum_{n > 0} \, \overline{a}_n \, \Gamma(0, 4 \pi n \Img(\tau)) \, q^{-n} & k = 1
   \end{cases}
\end{equation} where $\Gamma(a,b)$ is the incomplete gamma function:
\begin{equation}
    \Gamma(a, b):=\int_b^{\infty} e^{-t} t^a \frac{d t}{t}.
\end{equation}

Holomorphicity in $h$ implies that $\widehat{h}$ is related to the shadow $g$ by,
\begin{equation}
	(4 \pi \Img(\tau))^k \, \frac{\del \widehat{h}}{\del \overline{\tau}} = - 2 \pi i \, \overline{g}.
\end{equation}

\begin{definition}[Pure mock Jacobi form]
	A \textit{pure mock Jacobi form of weight $k$ and index $m$} is a holomorphic function $\varphi$ on $\hh \times \C$ that satisfies the elliptic transformation property (\ref{ellipticcond}) and the cuspidal condition, such that the theta coefficients of $\varphi$ are mock modular forms of weight $k - \frac{1}{2}$. The \textit{completion} $\widehat{\varphi}$ of $\varphi$ is defined as    
\begin{equation}
	\widehat{\varphi} (\tau, z) = \sum_{\ell \in \Z/2m\Z} \widehat{h}_\ell (\tau) \, \vartheta _{m, \ell}(\tau, z)=\varphi (\tau, z) + \sum_{\ell \in \Z/2m\Z} g^*_\ell (\tau) \vartheta_{m,\ell} (\tau, z).
\end{equation} where $g_\ell$ is the shadow of $h_\ell$. This is a Jacobi form of weight $k$ and index $m$.
\end{definition}

This is a natural generalization of pure mock modular forms. However, in most practical applications, we have more complicated functions as defined below. A prominent example will be the Appell-Lerch sums that we define in the next section.

\begin{definition}[Mixed mock modular forms]
	A \textit{mixed holomorphic mock modular form of weight $k|\vec{\ell}$}, with $k, \ell_j \in \frac{1}{2}\Z$, is a holomorphic function $h(\tau)$ with polynomial growth at cusps and completion of the form $\widehat{h} = h + \sum_j f_j g_j^*$, where $f_j$ is a holomorphic modular form of weight $\ell_j$ and $g_j$ is a holomorphic modular form of weight ${2-k+\ell_j}$. The completion transforms like a modular form of weight $k$.
\end{definition}

Finally, we have \textit{mixed mock Jacobi forms}, which are defined analogous to the above, ie. they have theta expansions in which the coefficients are mixed mock modular forms,
with completions that transform like ordinary Jacobi forms.

\subsection{Finite and polar parts}
We return to the discussion of meromorphic Jacobi forms. Right away, we can define a \textit{finite part} of a meromorphic Jacobi form $\varphi$ as
\begin{equation}
\label{finite}
	\varphi^F(\tau, z) \coloneqq \sum_{\ell \in \Z/2m\Z} h^\star_\ell (\tau) \, \widetilde{\vartheta}_{\frac{m}{M},\ell}(\tau,z),
\end{equation} with $h^\star_\ell$ as defined above in (\ref{merthetacoeff}). What remains is to construct a function that has poles exactly at $S(\varphi)$, weighted with the correct residues, that also satisfies the desired invariance properties. We will call this the \textit{polar part} of $\varphi$.

As we have restricted the class of meromorphic Jacobi forms to those with simple or double poles at specific points as described earlier, we can define the residues $D_s$ and $E_s$ by the Laurent expansion
\begin{equation}
	\bold{e}^{\frac{m}{M} \alpha z_s} \, \varphi(\tau, z_s + \epsilon) = \frac{E_s (\tau)}{(2 \pi i \epsilon)^2} + \frac{D_s (\tau) - 2\frac{m}{M}\alpha E_s(\tau)}{2 \pi i \epsilon} + \cdots \qquad \text{as } \epsilon \to 0.
\end{equation}

Then for $m \in \Z$, and fixed $q$ in the unit disc on $\C$, the \textit{averaging operator} is given by
\begin{equation}
	\text{Av}^{(m)}_q\big[f(y)\big] \coloneqq \sum_{\lambda \, \in \, \Z} q^{m \lambda^2} y^{2m \lambda} f(q^\lambda y),
\end{equation} for any function $f(y)$. It outputs a function that satisfies the elliptic property of an index $m$ Jacobi form. In particular, one can easily check that \begin{equation}
	\vartheta_{m,\ell}(\tau,z) = q^\frac{\ell^2}{4m} \, \text{Av}_q^{(m)}[y^\ell].
\end{equation}

For $c \in \R$ and $z \in \C \backslash \Z$, we  define the \textit{$n^\text{th}$-order rational function} by
\begin{equation}
	\mathcal{R}_c^{(n)} (y) \coloneqq \frac{y^c}{(2\pi i)^n}\sum_{r \in \Z} \frac{\bold{e}^{-cr}}{(z-r)^n}.
\end{equation}
It has a simple pole of residue 1 at $y = 1$. We finally have all the ingredients to define the \textit{$n^\text{th}$-order Appell-Lerch sum of index $m/M$} for $s \in \Q^2$ as
	\begin{equation}
		\widetilde{\mathcal{A}}^{\,n,s}_\frac{m}{M}(\tau, z) \coloneqq \mathcal{A}^{n,\widetilde{s}}_m(M\tau, z), \qquad \mathcal{A}^{n,s}_m(\tau, z) \coloneqq \bold{e}^{- m \alpha z_s} \text{Av}_q^{(m)}\bigg[\mathcal{R}^{(n)}_{-2 m \alpha}\bigg(\frac{y}{y_s}\bigg) \bigg],
	\end{equation} where $\widetilde{s} = (\alpha/M,\beta)$. It turns out that the Appell-Lerch sums are meromorphic mock Jacobi forms of weight 1 and index $m/M$ with respect to some congruence subgroup of $\slz$ depending on $s$.

Then, the \textit{polar part} of $\varphi$ is defined by
	\begin{equation}
    \varphi^P(\tau, z) \coloneqq \sum_{s \, \in \, S(\varphi)/(M\Z \times \Z)} \big(D_s(\tau) \, \widetilde{\mathcal{A}}_\frac{m}{M}^{\,1,s} (\tau, z) + E_s(\tau) \, \widetilde{\mathcal{A}}_\frac{m}{M}^{\,2,s} (\tau, z) \big).
	\end{equation} Now, we can state without proof,
\begin{theorem}[\cite{dabholkar2012quantum,bhand2023mock}]
\label{dmzmaintheorem}
	Let $\varphi(\tau, z)$ be a meromorphic Jacobi form with simple poles at $z_s$ for all $s \in S(\varphi)$, some discrete subset of $\Q^2$,  then $\varphi$ has the decomposition
	\begin{equation}
	\varphi(\tau, z) = \varphi^F(\tau, z) + \varphi^P(\tau, z),
\end{equation} where $\varphi^F$ and $\varphi^P$ are as defined above.
\end{theorem}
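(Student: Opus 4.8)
The plan is to reduce the theorem to the statement that $\psi \coloneqq \varphi - \varphi^P$ is a \emph{holomorphic} Jacobi form whose theta coefficients are exactly the $h^\star_\ell$ that define $\varphi^F$; then $\psi = \varphi^F$ and we are done. First I would verify that $\varphi^P$ reproduces the entire principal part of $\varphi$ at every pole. Since $\mathcal{R}^{(1)}_c$ has a simple pole of residue $1$ at $y = 1$, and the averaging operator $\text{Av}^{(m)}_q$ contributes a pole at $z = z_s$ only through its $\lambda = 0$ summand, I would Laurent-expand $\widetilde{\mathcal{A}}^{\,1,s}_{\frac{m}{M}}$ (and $\widetilde{\mathcal{A}}^{\,2,s}_{\frac{m}{M}}$ in the double-pole case) about $z_s$ and compare with the prescribed expansion defining $D_s, E_s$. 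The prefactor $\bold{e}^{-m\alpha z_s}$ and the orders of the rational functions are chosen precisely so that $D_s\,\widetilde{\mathcal{A}}^{\,1,s}_{\frac{m}{M}} + E_s\,\widetilde{\mathcal{A}}^{\,2,s}_{\frac{m}{M}}$ has principal part $\tfrac{E_s}{(2\pi i\epsilon)^2} + \tfrac{D_s - 2\frac{m}{M}\alpha E_s}{2\pi i\epsilon}$ at $z_s$, matching $\varphi$. Summing over $s \in S(\varphi)/(M\Z\times\Z)$ and using the elliptic property to propagate across the full orbit of poles, one concludes that $\psi$ is holomorphic in $z$ on all of $\hh \times \C$.

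Next, both $\varphi$ and each Appell-Lerch sum satisfy the index-$\frac{m}{M}$ elliptic transformation (\ref{ellipticcond}), hence so does $\psi$; being additionally holomorphic in $z$, it admits a genuine theta expansion $\psi = \sum_{\ell} \tilde h_\ell\,\widetilde{\vartheta}_{\frac{m}{M},\ell}$ whose coefficients are given by the \emph{path-independent} Cauchy integral (\ref{thetacoeff}). Evaluating that integral along the contour $z_0 = -\frac{\ell M}{2m}\tau$ used in (\ref{merthetacoeff}) and splitting $\psi = \varphi - \varphi^P$ by linearity, I obtain $\tilde h_\ell = h^\star_\ell - (\varphi^P)^F_\ell$, where $(\varphi^P)^F_\ell$ denotes the $\ell$-th averaged theta coefficient of the polar part.

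The heart of the argument, and the step I expect to be the main obstacle, is therefore the vanishing $(\varphi^P)^F_\ell = 0$ for all $\ell$, i.e.\ that the finite part of every Appell-Lerch sum is zero. To establish this I would substitute the defining series of $\text{Av}^{(m)}_q$ into the averaged integral (\ref{merthetacoeff}) and unfold the sum over $\lambda$ against the period integral over $\R/\Z$, collapsing the averaged theta coefficient to a single integral of $\mathcal{R}^{(n)}_{-2m\alpha}(y/y_s)$ along a horizontal line passing through its poles. The symmetric-average prescription (integrating the mean of the contours just above and just below each pole) then computes a principal value that vanishes by the reflection structure of the partial-fraction sum $\mathcal{R}^{(n)}$ about its poles. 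The delicate points here are the conditional convergence of the defining series and the justification of interchanging summation with integration during the unfolding; this is the genuine technical difficulty, and it is exactly where the DMZ/Zwegers machinery must be invoked.

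Granting the lemma, $\tilde h_\ell = h^\star_\ell$ for every $\ell$, so that $\psi = \sum_\ell h^\star_\ell\,\widetilde{\vartheta}_{\frac{m}{M},\ell} = \varphi^F$ by the definition (\ref{finite}), and hence $\varphi = \varphi^F + \varphi^P$, as claimed. For the theorem as stated one has only simple poles, so $E_s = 0$ and $\varphi^P = \sum_s D_s\,\widetilde{\mathcal{A}}^{\,1,s}_{\frac{m}{M}}$; the double-pole refinement is handled identically, replacing $\mathcal{R}^{(1)}$ by $\mathcal{R}^{(2)}$ and carrying the additional $E_s$ term through the principal-part matching in the first step.
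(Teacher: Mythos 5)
The paper states this theorem without proof (it is quoted directly from the cited works of Dabholkar--Murthy--Zagier and Bhand--Sen--Singh), and your proposal correctly reconstructs the argument given in those references: principal-part matching of the Appell-Lerch sums forces $\psi=\varphi - \varphi^P$ to be holomorphic in $z$ with the index-$\tfrac{m}{M}$ elliptic property, and the decisive lemma is exactly the one you isolate, namely that the averaged theta coefficients $h^\star_\ell$ of the Appell-Lerch sums vanish. That vanishing works essentially as you suspect: after unfolding the $\lambda$-sum of $\mathrm{Av}^{(m)}_q$ against the period integral, the canonical contour height $z_0=-\tfrac{\ell M}{2m}\tau$ is tuned so that, for each $\lambda$, the sign of the extracted Fourier mode and the side of the contour relative to the pole of $\mathcal{R}^{(n)}$ cancel termwise, while the single term whose pole lies on the contour is killed by the principal-value average --- so your outline is, modulo writing out this bookkeeping, the same proof as in the cited sources.
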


Note that, so far, there is no reason to make the particular choice of $\varphi^F$ and $\varphi^P$ as we did. We could have defined $h_\ell^\star (\tau)$ differently, leading to a different splitting. However, our choice is special because of the particular transformation properties of $\varphi^F$ we obtain. First, we define the Eichler integrals of unary theta series
\begin{align}
\label{thetaeichler}
	\Theta^{s*}_{\frac{m}{M},\ell} (\tau) &= \frac{\bold{e}^{\frac{m}{M} \alpha \beta}}{2 \sqrt{\pi}} \sum_{\lambda \, \in \, \Z + \frac{\alpha}{M} + \ell/2m} \text{sgn}(\lambda) \, \bold{e}^{-2 m \beta \lambda} \, \Gamma\bigg(\frac{1}{2}, 4 \pi m M\lambda^2 \Img(\tau) \bigg) q^{-m M \lambda^2}, \\
    \label{unarytheta2}
	\vartheta^{s*}_{\frac{m}{M},\ell} (\tau) &= \frac{\bold{e}^{\frac{m}{M} \alpha \beta}}{2 \pi \sqrt{mM  \Img(\tau)}} - \frac{\bold{e}^{\frac{m}{M} \alpha \beta}}{\sqrt{\pi}} \sum_{\lambda \, \in \, \Z + \frac{\alpha}{M} + \ell/2m} |\lambda| \, \bold{e}^{-2 m \beta \lambda} \, \Gamma\bigg(-\frac{1}{2}, 4 \pi m M \lambda^2  \Img(\tau) \bigg) q^{-mM \lambda^2}.
\end{align}

Then, we have the following theorem:
\begin{theorem}[\cite{dabholkar2012quantum,bhand2023mock}]
\label{dmzmaintheorem2}
	Let $\varphi$ be a meromorphic Jacobi form of weight $k$ and index $m/M$ with respect to $\Gamma \ltimes \Z^2$ having simple poles on some subset $S(\varphi) \subset \Q^2$ as described earlier. Let $h_\ell^\star$, $\Theta_{\frac{m}{M}, \ell}^{s*}$, $\vartheta_{\frac{m}{M}, \ell}^{s*}$ and $\varphi^F$ be as defined in (\ref{merthetacoeff}), (\ref{thetaeichler}), (\ref{unarytheta2}) and (\ref{finite}) respectively. Then,
	\begin{enumerate}
		\item $h_\ell^\star$ is a mixed mock modular form with completion:
	\begin{equation}
		\widehat{h}_\ell^\star (\tau) \coloneqq h_\ell^\star (\tau) - \sum_{s \, \in \, S(\varphi)/(M\Z \times \Z)} \big( D_s(\tau) \, \Theta_{\frac{m}{M}, \ell}^{s*} (\tau) + E_s(\tau) \vartheta^{s*}_{\frac{m}{M},\ell} (\tau) \big).
	\end{equation} 
	\item $\varphi^F$ is a mixed mock Jacobi form with the completion,
	\begin{equation}
    \label{eq:finitepartcompletion}
		\widehat{\varphi}^F(\tau, z) \coloneqq \sum_{\ell \, \in \, \Z/2m\Z} \widehat{h}^\star_\ell (\tau) \, \widetilde{\vartheta}_{\frac{m}{M},\ell}(\tau,z),
	\end{equation} which itself is a Jacobi form of weight $k$ and index $\frac{m}{M}$ with respect to $\Gamma \ltimes (M\Z \times \Z)$.
	\end{enumerate}
\end{theorem}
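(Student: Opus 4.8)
The plan is to reduce both assertions to the modular completion of the polar part $\varphi^P$, exploiting the decomposition $\varphi = \varphi^F + \varphi^P$ already furnished by Theorem \ref{dmzmaintheorem}. Because $\varphi$ is a genuine (if meromorphic) Jacobi form, it transforms exactly as required of weight $k$ and index $m/M$; the entire failure of modularity is therefore concentrated in the two summands individually. The strategy is to complete $\varphi^P$ to an honest real-analytic Jacobi form $\widehat\varphi^P$ and then to recognize the completed finite part as the complementary piece, so that $\widehat\varphi^F = \varphi - \widehat\varphi^P$ automatically inherits good transformation behavior.

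First I would invoke the fact, recorded above, that each Appell-Lerch sum $\widetilde{\mathcal{A}}^{\,n,s}_{m/M}$ is a mock Jacobi form of weight $1$ and index $m/M$, so it admits a non-holomorphic completion $\widehat{\widetilde{\mathcal{A}}}^{\,n,s}_{m/M}$ that is a genuine real-analytic Jacobi form. The crucial computational step is to identify the theta coefficients of the correction term: I would show that the completion of the simple-pole sum $\widehat{\widetilde{\mathcal{A}}}^{\,1,s}_{m/M}$ has theta expansion whose coefficients are precisely $\Theta^{s*}_{m/M,\ell}$ from (\ref{thetaeichler}), and that the completion of the double-pole sum $\widehat{\widetilde{\mathcal{A}}}^{\,2,s}_{m/M}$ produces the coefficients $\vartheta^{s*}_{m/M,\ell}$ from (\ref{unarytheta2}). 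This follows from writing the averaging operator out explicitly and applying the standard Eichler-integral representation of the completion of a unary theta series: the incomplete-gamma factors $\Gamma(\tfrac12,\cdot)$ and $\Gamma(-\tfrac12,\cdot)$ are exactly the non-holomorphic Eichler integrals whose $\bar\tau$-derivatives reproduce the weight-$\tfrac12$ and weight-$\tfrac32$ unary theta shadows, while the summation lattice $\lambda \in \Z + \tfrac{\alpha}{M} + \tfrac{\ell}{2m}$ is dictated by the pole location $s=(\alpha,\beta)$.

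With the completed sums in hand I would assemble
\begin{equation}
\widehat\varphi^P(\tau,z) \coloneqq \sum_{s\,\in\,S(\varphi)/(M\Z\times\Z)} \big(D_s(\tau)\,\widehat{\widetilde{\mathcal{A}}}^{\,1,s}_{m/M}(\tau,z) + E_s(\tau)\,\widehat{\widetilde{\mathcal{A}}}^{\,2,s}_{m/M}(\tau,z)\big),
\end{equation}
and argue it is a genuine Jacobi form of weight $k$ and index $m/M$ with respect to $\Gamma \ltimes (M\Z\times\Z)$: the residue functions $D_s, E_s$ are holomorphic modular forms whose weight combines with the weight-$1$ completed Appell-Lerch sums to total $k$, and the action of $\Gamma$ permutes the coset representatives $s$ while transforming the $D_s, E_s$ among themselves, so that the full sum is invariant. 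By construction $\widehat\varphi^P$ shares the principal parts of $\varphi^P$, since the added completion terms depend only on $\tau$ and multiply the holomorphic theta functions $\widetilde\vartheta_{m/M,\ell}$, introducing no new $z$-poles; hence $\widehat\varphi^P$ has exactly the poles and residues of $\varphi$.

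Finally I would read off theta coefficients. The correction terms above are precisely those appearing in the completion $\widehat h^\star_\ell$ with opposite sign, so subtracting the theta expansions yields the identity $\widehat\varphi^F = \varphi^F + \varphi^P - \widehat\varphi^P = \varphi - \widehat\varphi^P$. This exhibits $\widehat\varphi^F$ as the difference of two real-analytic Jacobi forms, hence itself transforming as a Jacobi form of weight $k$ and index $m/M$; its poles cancel against those of $\widehat\varphi^P$, so it is holomorphic, proving part (2). Part (1) then follows: the theta coefficients $\widehat h^\star_\ell$ of the Jacobi form $\widehat\varphi^F$ are vector-valued modular forms of weight $k-\tfrac12$, and the relation $h^\star_\ell = \widehat h^\star_\ell + \sum_s (D_s \Theta^{s*}_{m/M,\ell} + E_s \vartheta^{s*}_{m/M,\ell})$ displays $h^\star_\ell$ as a mixed mock modular form, with the modular forms $D_s, E_s$ playing the role of the $f_j$ and the Eichler integrals $\Theta^{s*}, \vartheta^{s*}$ the role of the $g_j^*$ in the definition. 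The main obstacle is the identification carried out in the second paragraph: establishing that the real-analytic completion of each Appell-Lerch sum is governed by exactly the Eichler integrals (\ref{thetaeichler}) and (\ref{unarytheta2}) requires the full Zwegers-type analysis of the associated $\mu$-functions and careful bookkeeping of the half-integral-weight shadows, and every subsequent modularity claim hinges on getting these coefficients right.
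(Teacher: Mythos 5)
Your proposal is correct and takes essentially the same route as the proof the paper relies on (the paper states this theorem without proof, citing Dabholkar--Murthy--Zagier and Bhand--Sen--Singh, whose argument is exactly this): complete the Appell--Lerch sums in the style of Zwegers, identify the theta coefficients of the correction terms with the Eichler integrals $\Theta^{s*}_{\frac{m}{M},\ell}$ and $\vartheta^{s*}_{\frac{m}{M},\ell}$, and set $\widehat{\varphi}^F = \varphi - \widehat{\varphi}^P$ so that modularity is inherited from $\varphi$. One small wording correction: $\widehat{\varphi}^F$ is pole-free but \emph{not} holomorphic (the completion terms are real-analytic), which is precisely why $\varphi^F$ is a mixed mock Jacobi form rather than an ordinary one.
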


Theorems \ref{dmzmaintheorem} and \ref{dmzmaintheorem2}, and their application to finding degeneracies of $1/4$-BPS dyons in CHL models, are the main results of \cite{dabholkar2012quantum, bhand2023mock}. In the following section, we shall use these theorems to find the degeneracies of microstates using the twisted index.

\section{Mock Jacobi forms from twisted partition function in CHL models}\label{sec:3}
As stated in the introduction, we are interested in the degeneracy of certain single-centered dyons in CHL models, i.e. quarter BPS states in $\Z_M$ CHL models invariant under the residual $\Z_N$-symmetry. Let us describe the microscopic system in detail.

\paragraph{The system.} We focus on states carrying one unit of Kaluza-Klein monopole charge associated with circle $\widetilde{S}^1$, one unit of D5-brane charge wrapped on $\mathcal{M} \times S^1$, $Q_1$ units of D1-brane charge wrapped on $S^1$, left-moving momentum $n/M$ along $S^1$ and $J$ units of momentum along $\widetilde{S}^1$. Note the convention that the coordinate radius of the original circle $S^1$ before orbifolding is taken to be $2 \pi M$. Thus the minimum amount of momentum along $S^1$ is $1/M$.

We define electric and magnetic charge vectors $\vec{Q}$ and $\vec{P}$ such that
\begin{equation}
\label{eq:chargevec}
    Q^2 = \frac{2n}{M}, \qquad P^2 = 2 Q_1, \qquad Q \cdot P = J, \qquad n, Q_1, J \in \Z.
\end{equation} 

\subsection{Twisted-twined partition function} \label{sec:twist_twined_par_fu}
The partition function of quarter-BPS dyons invariant under $\Z_N$ for $\Z_M$ orbifolds of type IIB superstring theory compactified on $\mathcal{M} \times S^1 \times \widetilde{S}^1$ for $\mathcal{M} = \text{K}3$ or $T^4$ first appeared in \cite{sen2010discrete}. It is the inverse of a Siegel modular form $\Phi^{(M,N)}_k : \hh_2 \to \C$ of weight $k$ of a subgroup $G$ of $\mathrm{Sp}(4,\Z)$, given by the infinite product expansion,
\begin{align}\label{eq:product_PhiMN}
	\Phi^{(M,N)}_k(\tau, z, \sigma) &= q^{\alpha} y^{\beta} p^{\gamma} \prod_{b=0}^1 \prod_{r=0}^{N-1} \prod_{r'=0}^{M-1} \prod_{\substack{k' \in \Z_{\geq 0} + \frac{r'}{M} \\ \\ \ell \in \Z_{\geq 0}, \, j \in 2 \Z + b \\ \\j < 0 \text{ for } k'=\ell=0}} \big[1 - \bold{e}^{r/N} q^{\ell} y^j p^{k'}\big]^a, \\
	a &\coloneqq \sum_{s=0}^{N-1} \sum_{s'=0}^{M-1} c_b^{(0,s;r',s')}(4 k' \ell - j^2) \, \bold{e}^{-(r \frac{s}{N}+\ell \frac{s'}{M})},
\end{align} where the coefficients $c_b^{(r,s;r',s')}$ are defined via
\begin{equation}\label{eq:cbrsr's'def}
	\sum_{b=0}^1\sum_{\substack{\ell \in 2 \Z + b \\ n \in \Z/MN}} c_b^{(r,s;r',s')} (4n - \ell^2) \, q^n y^\ell = \frac{1}{MN} \Tr_{\text{RR}; \, g_M^{r'} g_N^r} \big[g_M^{s'} g_N^s \, (-1)^{J_L + J_R} \, q^{L_0-\frac{c}{24}} \bar{q}^{\bar{L}_0-\frac{\bar{c}}{24}} y^{J_L} \big],
\end{equation} 
where $g_M,g_N$ are the generators of $\Z_M$ and $\Z_N$ respectively,  $\Tr_{\text{RR}; \, g_M^{r'} g_N^r}$ means that the trace is taken over all the $g_M^{r^{\prime}} g_N^r$ twisted RR sector states in the $c=\bar{c}=6$, $\CN=(4,4)$ SCFT with target space $\mathcal{M}$, $L_0$ and $\bar{L}_0$ are the left and right-moving Virasoro generators and $J_L / 2$ and $J_R / 2$ are the generators of the $\mathrm{U}(1)_L \times \mathrm{U}(1)_R$ subgroup of the $\mathrm{SU}(2)_L \times \mathrm{SU}(2)_R$ R-symmetry group of this SCFT, 
and
\begin{align}
	\alpha &= \frac{1}{24M} Q_{0,0} - \frac{1}{2M} \sum_{s = 1}^{M-1} Q_{0,s'} \frac{\bold{e}^{-s/M}}{(1-\bold{e}^{-s/M})^2}, \label{eq:alpha_def}\\
	\beta &= 1, \\
	\gamma &= \frac{1}{24M}\chi(\mathcal{M})=\frac{1}{24M} Q_{0,0}, \label{eq:gamma_def}
\end{align} 
where $\chi(\CM)$ is the Euler number of manifold $\CM$, $Q_{r',s'} \coloneqq Q_{0,0;r',s'}$ and
\begin{equation}
  Q_{r,s;r',s'} \coloneqq MN\left(c_0^{(r,s;r',s')}(0) + 2c_1^{(r,s;r',s')}(-1)\right).  
\end{equation}

One can show that $\alpha,M\gamma\in\Z$. For $N = M = 1$, the partition function is exactly the inverse of the Igusa cusp form $\Phi_{10}$, defined in (\ref{igusa}) \cite{dijkgraaf1997counting}. 

Some special generators of the group $G$ are found in appendix \ref{app:threshold_integral}. The coefficients $c_b^{(r,s;r',s')}(u)$ satisfy the following \cite{sen2010discrete}:
\begin{align}
\label{eq:c0(u)u<0} c_0^{(r,s;r',s')}(u) &= 0,\qquad u<0, \\
\label{eq:c1(u)u<-1} c_1^{(r,s;r',s')}(u) &= 0,\qquad u<-1,
\end{align}
and,
\begin{align}
\label{eq:c1(-1)} c_1^{\left(0, s ; 0, s^{\prime}\right)}(-1) &= \begin{cases}
    \frac{2}{M N} &\qquad \text { for } \mathcal{M}=\text{K}3, \\
    \frac{1}{M N}\left(2-\bold{e}^{s / N} \bold{e}^{s^{\prime} / M} - \bold{e}^{-s / N} \bold{e}^{-s^{\prime} / M}\right) &\qquad \text { for } \mathcal{M}=T^4.
\end{cases}
\end{align}

We also have the relation,
\begin{equation}\label{eq:Qrsr's'sum}
    \sum_{b=0}^1\sum_{\ell\in 2\Z+b}c_b^{(r,s;r',s')}(4n-\ell^2)=\frac{1}{MN}Q_{r,s;r',s'}\delta_{n,0},
\end{equation}

\begin{proof}
    Note that taking $z=0$ in \eqref{eq:cbrsr's'def} gives 
\begin{equation}\label{eq:inteq_sumcbrsr's'}
\sum_{b=0}^1 \sum_{\substack{\ell \in 2 \Z + b \\ n \in \Z/MN}} c_b^{(r,s;r',s')} (4n-\ell^2) \, q^n = \frac{1}{MN} \Tr_{\text{RR}; \, g_M^{r'} g_N^r} \big[g_M^{s'} g_N^s \, (-1)^{J_L + J_R} \, q^{L_0-\frac{c}{24}} \bar{q}^{\bar{L}_0-\frac{c}{24}} \big].    
\end{equation}
The Cartan generators $J_L,J_R$ of the $\mathrm{SU}(2)_L \times \mathrm{SU}(2)_R$ R-symmetry group of the non-linear sigma model with target space $\mathcal{M}$ can be identified with the worldsheet fermion numbers $F_L,F_R$ respectively \cite{sen2008black}. Because of unbroken supersymmetry, the fermionic and bosonic states with $(L_0-c/24,\bar{L}_0-\bar{c}/24)\neq (0,0)$ are paired and hence the terms on the RHS of \eqref{eq:inteq_sumcbrsr's'} with $(L_0-c/24,\bar{L}_0-\bar{c}/24)\neq (0,0)$ cancel. 
Consequently, the $\tau$-dependence of the RHS of \eqref{eq:inteq_sumcbrsr's'} drops out and only the $n=0$ term on the LHS of \eqref{eq:inteq_sumcbrsr's'} contributes. The result now follows using \eqref{eq:c0(u)u<0} and \eqref{eq:c1(u)u<-1}.
\end{proof} 

Also note that, for $r, r', s, s' = 0$, the RHS of \eqref{eq:cbrsr's'def} is just the elliptic genus of $\mathcal{M}$ and is given by
\begin{align}
\bold{EG}=
	\begin{cases}
		 0 &\text{ for } \mathcal{M} = T^4, \\
		 \frac{2}{MN} \, \varphi_{0,1}(\tau, z) &\text{ for } \mathcal{M} = \text{K}3,
	\end{cases} 
\end{align} where $\varphi_{0,1}$ is defined in (\ref{def:phi01}).

For non-vanishing $r, r', s, s'$, the contributions can be computed using purely geometric data -- the fixed points of elements of $\Z_M \times \Z_N$ and the action of elements of $\Z_M \times \Z_N$ near these points. See \cite[appendix A]{sen2010discrete} for an algorithm for explicitly computing the RHS of (\ref{eq:cbrsr's'def}). 

The weight of the Siegel modular form $\Phi^{(M,N)}_k$
is given by \cite{sen2010discrete}
\begin{equation}
    k:=\frac{1}{2} \sum_{s=0}^{N-1} \sum_{s^{\prime}=0}^{M-1} c_0^{\left(0, s ; 0, s^{\prime}\right)}(0).
\end{equation} 

Finally, the degeneracy of states carrying charges $(\vec{Q},\vec{P})$ is given by
\begin{equation}
    \label{degeneracyformula}
	d(\vec{Q},\vec{P}) = \frac{1}{M}(-1)^{Q \cdot P + 1} \int_\mathcal{C} d\tau dz d\sigma \, q^{-\frac{1}{2}MQ^2} \, y^{-Q \cdot P} \, p^{-\frac{1}{2} P^2 / M} \frac{1}{\Phi^{(M,N)}_k},
\end{equation} where $\mathcal{C}$ is the three dimensional subspace of $\C^3$ given by \cite{Cheng:2007ch}
\begin{align}
\label{eq:attractorcontour}
	\Img(\tau) = \frac{P^2}{M\varepsilon}, \qquad &\Img(z) = -\frac{P\cdot Q}{\varepsilon}, \qquad \Img(\sigma) = \frac{MQ^2}{\varepsilon},\\
	0 \leq \Rel(\tau) \leq 1, \qquad 0 \leq &\Rel(z) \leq 1, \qquad 0 \leq \Rel(\sigma) \leq M,
\end{align}
where $\varepsilon>0$ is a small real number. This is called the \textit{attractor contour}.\footnote{The choice of the integration contour is not unique, and the ambiguity is related to the phenomenon of wall crossing. However, the attractor mechanism ensures that the degeneracies are fixed in terms of the black hole charges, and the fluctuations due to the different values of the asymptotic moduli are sufficiently small that they can be ignored \cite{Cheng:2007ch}.}

The Siegel modular form has double zeros on the hypersurface given by \cite{sen2010discrete}

\begin{equation}
\label{eq:zeroes}
	n_2 (\sigma \tau - z^2) + jz + n_1 \sigma - m_1 \tau + m_2 = 0
\end{equation}
such that 
\begin{align}
	m_1\in M\Z,\qquad m_2, n_1 \in\Z,\qquad n_2 \in &N\Z, \qquad j \in 2\Z + 1, \\ m_1n_1 + m_2n_2 + \frac{j^2}{4} &= \frac{1}{4}.
\end{align}
 Finally, we will need the $z\to 0$ behaviour of the partition function. In this limit, the dominant $z$-dependent term in the product \eqref{eq:product_PhiMN} comes from the $k'=\ell=0,j=-1$ term. In particular, we have to consider only the $r'=0,b=1$ term. We want to analyze 
 \begin{equation}
     y\prod_{r=0}^{N-1}(1-\bold{e}^{r/N}y^{-1})^{a'},\qquad a'=\sum_{s=0}^{N-1}\sum_{s'=0}^{M-1}c_1^{(0,s;0,s')}(-1) \bold{e}^{-rs/N}.
 \end{equation}
 
 Using the identity 
 \begin{equation}
     \sum_{s=0}^{N-1} \bold{e}^{-rs/N}=\begin{cases}
         N,&r\equiv 0\bmod N,
         \\
         0,&\text{otherwise},
     \end{cases}
 \end{equation}
 and the properties \eqref{eq:c1(-1)}, we see that 
 \begin{equation}
     a'=2\delta_{r,0}.
 \end{equation}
 This gives us 
 \begin{equation}
    y \prod_{r = 0}^{N - 1}(1 - \bold{e}^{r/N} y^{-1})^{a'} \simeq -4 \pi^2 z^2 + O(z^4),\qquad z\to 0. 
 \end{equation}
 We can then write 
 \begin{equation}
 \begin{split}
     \Phi^{(M,N)}_k(\tau,z,\sigma)&\simeq -4\pi^2z^2q^{\alpha}p^{\gamma} \prod_{b=0}^1 \prod_{r=0}^{N-1} \prod_{r'=0}^{M-1} \prod_{\substack{k' \in \Z_{\ge 0} + \frac{r'}{M} \\ \\ \ell \in \Z_{\ge 0}, \, j \in 2 \Z + b \\ \\(k',\ell)\neq(0,0)}} \big[1 - \bold{e}^{r/N} q^{\ell}  p^{k'}\big]^{a}+O(z^4),\qquad z\to 0
     \\
     &=-4\pi^2z^2q^{\alpha}p^{\gamma} \prod_{r=0}^{N-1} \prod_{r'=0}^{M-1} \prod_{\substack{k' \in \Z_{\ge 0} + \frac{r'}{M} \\ \\ \ell \in \Z_{\ge 0}, \, (k',\ell)\neq(0,0)}} \big[1 - \bold{e}^{r/N} q^{\ell}  p^{k'}\big]^{a''}+O(z^4),
 \end{split}    
 \end{equation}
 where 
 \begin{equation}
 a''\coloneqq \sum_{s=0}^{N-1} \sum_{s'=0}^{M-1}\sum_{b=0}^1 \sum_{j\in2\Z+b} c_b^{(0,s;r',s')}(4 k' \ell - j^2) \, \bold{e}^{-(r \frac{s}{N} + \ell \frac{s'}{M})},    
 \end{equation}
 which, using \eqref{eq:Qrsr's'sum}, becomes 
 \begin{equation}
     a''=\frac{\delta_{k'\ell,0}}{MN}\sum_{s=0}^{N-1} \sum_{s'=0}^{M-1}Q_{0,s;r',s'} \bold{e}^{-(r \frac{s}{N} + \ell \frac{s'}{M})}.
 \end{equation}
Thus we get 
\begin{equation}\label{eq:PhiMNatz=0}
\Phi^{(M,N)}_k(\tau,z,\sigma)\simeq -4\pi^2z^2f^{(k)}(\tau)g^{(k)}(\sigma)+O(z^4),\qquad z\to 0,    
\end{equation}
where $f^{(k)}$ corresponds to the $k'=0,\ell\geq 1$ terms and $g^{(k)}$ corresponds to the $k'\geq 1,\ell=0$ terms. Explicitly, we have 
\begin{equation}\label{eq:f_kg_kdef}
\begin{split}
    f^{(k)}(\tau)&:=q^\alpha\prod_{r=0}^{N-1}\prod_{\ell=1}^\infty\left[1-\bold{e}^{r/N}q^\ell\right]^{s_{r\ell}},
    \\
    g^{(k)}(\sigma)&:=p^\gamma\prod_{r=0}^{N-1}\prod_{r'=0}^{M-1}\prod_{k' \in \Z_{>0} + \frac{r'}{M}}\left[1-\bold{e}^{r/N}p^{k'}\right]^{t_{r}},
\end{split}    
\end{equation}
where 
\begin{equation}
\begin{split}
    s_{r\ell}&:=\frac{1}{MN}\sum_{s=0}^{N-1} \sum_{s'=0}^{M-1}Q_{0,s;0,s'}\bold{e}^{-(s \frac{r}{N}+s' \frac{\ell}{M})}=\sum_{s=0}^{N-1} \sum_{s'=0}^{M-1}\left(c_0^{(0,s;0,s')}(0)+2c_1^{(0,s;0,s')}(-1)\right)\bold{e}^{-(s \frac{r}{N}+s' \frac{\ell}{M})},
    \\
    t_{r}&:=\frac{1}{MN}\sum_{s=0}^{N-1} \sum_{s'=0}^{M-1}Q_{0,s;r',s'}\bold{e}^{-s \frac{r}{N}}=\sum_{s=0}^{N-1} \sum_{s'=0}^{M-1}\left(c_0^{(0,s;r',s')}(0)+2c_1^{(0,s;r',s')}(-1)\right)\bold{e}^{-s \frac{r}{N}}.
\end{split}    
\end{equation}

\subsection{Invariance groups of $\Phi^{(M,N)}_k$}
To derive the modular properties of the degeneracy function, we first need to find the invariance properties of the Siegel modular form. For this, we describe some special generators of the subgroup $G < \rm{Sp}(4,\R)$ under which $\Phi^{(M,N)}_k$ transforms as a Siegel modular form. 

For $t,P\in\Z_{> 0}$, let us define the following elements of $\rm{Sp}(4,\R)$:
\begin{equation}\label{eq:g_123def}
\begin{aligned}
g_{1}(a, b, cP, d) & \equiv\left(\begin{array}{cccc}
a & 0 & b & 0 \\
0 & 1 & 0 & 0 \\
cP & 0 & d & 0 \\
0 & 0 & 0 & 1
\end{array}\right),\qquad g^1(a,bP,c,d)\equiv\begin{pmatrix} 
1&0&0&0\\0&a&0&bP\\0&0&1&0\\0&c&0&d
\end{pmatrix},
\\
g_4(a,b,cP,d)&\equiv\begin{pmatrix}
a&b&0&0\\cP&d&0&0\\0&0&d&-cP\\0&0&-b&a
\end{pmatrix},\qquad a d-b cP=1, \qquad a, d=1 \bmod P,\\
% g_{2} & \equiv\left(\begin{array}{cccc}
% 0 & 1 & 0 & 0 \\
% -1 & 0 & 0 & 0 \\
% 0 & 0 & 0 & 1 \\
% 0 & 0 & -1 & 0
% \end{array}\right),\qquad 
V_t &\equiv \frac{1}{\sqrt{t}} \begin{pmatrix}
 		0 & t & 0 & 0 \\ 1 & 0 & 0 & 0 \\ 0 & 0 & 0 & 1 \\ 0 & 0 & t & 0
 	\end{pmatrix},\qquad U_t\equiv \begin{pmatrix}
1 & 0 & 0 & 0 \\
0 & 0 & 0 & \frac{1}{t} \\
0 & 0 & 1 & 1 \\
t & -t & 0 & 0
\end{pmatrix},\qquad t\in\IN, \\
g_{3}(\lambda, \mu,\kappa) & \equiv\left(\begin{array}{cccc}
1 & 0 & 0 & \mu \\
\lambda & 1 & \mu & \kappa \\
0 & 0 & 1 & -\lambda \\
0 & 0 & 0 & 1
\end{array}\right),\qquad g^3(\lambda,\mu,\kappa)\equiv\begin{pmatrix}
1&\lambda&\kappa&\mu\\0&1&\mu&0\\0&0&1&0\\0&0&-\lambda&1
\end{pmatrix}, \qquad
\lambda, \mu, \kappa \in \Z.
\end{aligned}
\end{equation} 

The matrices above define the following embeddings
\begin{equation}
\label{eq:sp4_zembeddings}
\begin{split}    g_1:\Gamma_1(P)\longrightarrow \mathrm{Sp}(4,\Z),\qquad g^1:\Gamma^1(P)\longrightarrow \mathrm{Sp}(4,\Z),
\\
g_3:H_3(\Z)\longrightarrow \mathrm{Sp}(4,\R),\qquad g_4:\Gamma_1(P)\longrightarrow \mathrm{Sp}(4,\Z), 
\end{split}
\end{equation}
where $H_3(\Z)$ is the discrete Heisenberg group.
    
Let $P$, $t \in \Z_{>0}$. We define the \textit{paramodular group} $\Gamma_t(P)$ by
 	\begin{equation}
 		\Gamma_t(P) \coloneqq \left\{ \begin{pmatrix}
 			* & *t & * & * \\ * & * & * & * t^{-1} \\ *P & *Pt & * & * \\ *Pt & *Pt & *t & *
 		\end{pmatrix} \in \mathrm{Sp}(4,\Q), \text{ all } * \in \Z \right\}.
 	\end{equation}
$P$ is called the level of $\Gamma_t(P)$. The \textit{full paramodular group} $\Gamma_t$ is the group for which $P=1$.  
The paramodular group has a normal double extension: 
\begin{equation}
   \Gamma^+_t(P) = \Gamma_t(P)\cup \Gamma_t(P) V_t 
\end{equation}
$\Gamma^+_t(P)$ is generated by the \textit{maximal parabolic subgroup} $\Gamma^\infty_t(P)$ and $V_t$ \cite{Gritsenko:1996tm}, where 
\begin{equation}
    \Gamma_t^{\infty}(P)=\left\{\left(\begin{array}{cccc}
* & 0 & * & * \\
* & 1 & * & * t^{-1} \\
* P & 0 & * & * \\
0 & 0 & 0 & 1
\end{array}\right) \in \Gamma_t(P), \text { all } * \in \mathds{Z}\right\}.
\end{equation}
The \textit{Jacobi group} is then given by 
\begin{equation}
    \Gamma^J(P)=\left(\Gamma_t^{\infty}(P) \cap \mathrm{Sp}(4, \mathds{Z})\right) / \pm \mathbf{1}_4 \simeq \Gamma_0(P) \ltimes H_3(\mathds{Z}),
\end{equation}
where the embedding of $\Gamma_0(P),H_3(\IZ)$ in $\mathrm{Sp}(4,\IZ)$ is given by the maps $g_1,g_3$ (or $g^1,g^3$) respectively. 
\par

We now present the theorem we built towards in this subsection:
\begin{theorem}\label{thm:invariancegroup_PhiMN}
Let $G$ be the subgroup of $\rm{Sp}(4,\Z)$ under which $\Phi^{(M,N)}_k$ is a Siegel modular form of weight $k$. Then $G$ contains the following subgroups
\begin{align}\label{eq:inv_elements_threshold1}
 g_1(\Gamma_1(MN)),\qquad g^1(\Gamma(M,N)),\qquad g_4(\Gamma_1(M)),
\end{align} 
In addition, $G$ also contains the elements 
\begin{align}\label{eq:inv_elements_threshold2}
    g_3(\lambda M,\mu,\kappa M),\qquad \lambda,\mu,\kappa\in\Z,
    \\
    g^3(\lambda,\mu,\kappa),\qquad \lambda,\mu,\kappa\in\Z.
\end{align}
\end{theorem}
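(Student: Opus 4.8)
\section*{Proof proposal}

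The plan is to read off each transformation from the threshold-integral representation of $\log\Phi^{(M,N)}_k$ established in appendix~\ref{app:threshold_integral}, rather than wrestling with the infinite product \eqref{eq:product_PhiMN} directly. That representation expresses a suitably regularized $-\log\Phi^{(M,N)}_k$ as a genus-one integral over the fundamental domain $\mathcal{F}$ of a worldsheet modular parameter $\rho\in\hh$, of a Siegel--Narain theta kernel $\Theta(\rho,\bar\rho;\Omega)$ whose embedding moduli are parametrized by $\Omega=\left(\begin{smallmatrix}\tau & z\\ z & \sigma\end{smallmatrix}\right)$, paired against the twisted--twined new supersymmetric indices of the internal SCFT that supply the coefficients $c_b^{(r,s;r',s')}$ of \eqref{eq:cbrsr's'def}. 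Crucially, the entire $\Omega$-dependence of $\Phi$ sits inside the lattice data of $\Theta$.

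We would then reduce invariance of $\Phi$ to invariance of this kernel. The integral over $\mathcal{F}$ is manifestly invariant under the worldsheet modular group $\slz$ acting on $\rho$, once one sums over the twisted--twined sectors and uses the known modular covariance of the internal index; hence the response of $\Phi$ to a candidate $g=\left(\begin{smallmatrix}A & B\\ C & D\end{smallmatrix}\right)\in\mathrm{Sp}(4,\Z)$ is dictated solely by how $\Theta$ behaves under the induced motion $\Omega\mapsto(A\Omega+B)(C\Omega+D)^{-1}$ of the lattice moduli. For each listed generator the task is then to exhibit the corresponding Narain-lattice isometry and verify two things: that it preserves the $\Z_M\times\Z_N$ orbifold grading (the sector labels and the twining phases $\bold{e}^{r/N}$), and that it rescales the Gaussian in $\Theta$ by exactly $\det(C\Omega+D)^k$. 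Invariance of the integral then yields $\Phi\mapsto\det(C\Omega+D)^k\,\Phi$, as required.

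We would then work through the generators. A direct computation shows $g_1(a,b,cP,d)$ acts as the Jacobi-type motion $\tau\mapsto\frac{a\tau+b}{cP\tau+d}$, $z\mapsto\frac{z}{cP\tau+d}$, $\sigma\mapsto\sigma-\frac{cPz^2}{cP\tau+d}$ with $\det(C\Omega+D)=cP\tau+d$, so its invariance is the statement that the Fourier--Jacobi coefficients are modular in $(\tau,z)$ for the relevant congruence group, the level $P=MN$ being forced by demanding that the isometry fix the full $\Z_M\times\Z_N$ discriminant data simultaneously. The elements $g^1$ and $g_4$ are treated identically on the $(\sigma,z)$ block and on the rotation $\Omega\mapsto A\Omega A^{T}$, and the level drops to $\Gamma(M,N)=\Gamma_1(N)\cap\Gamma^1(M)$ and to $\Gamma_1(M)$ respectively, according to which of the two cyclic factors the isometry is obliged to preserve in that variable. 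Finally $g_3(\lambda M,\mu,\kappa M)$ and $g^3(\lambda,\mu,\kappa)$ are the Heisenberg translations, which we would read off directly from the (quasi-)periodicity of the product \eqref{eq:product_PhiMN}: unit period in $\tau$ and $z$, period $M$ in $\sigma$ coming from the fractional exponents $k'\in\Z+r'/M$, and the shift $z\mapsto z+\lambda M\tau$ dictated by the index-$m/M$ elliptic condition \eqref{ellipticcond}.

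The hard part is pinning down the level of each congruence subgroup exactly, rather than merely up to a divisor of $MN$: one must show that the orbifold projection cuts the would-be full T-duality group down to precisely $\Gamma_1(MN)$, $\Gamma(M,N)$ and $\Gamma_1(M)$. In the threshold integral this means carefully tracking how a candidate isometry permutes the sectors labelled by $(r,s;r',s')$ and how it acts on the twining phases $\bold{e}^{r/N}$ and the fractional momenta $r'/M$: such an element survives only when this permutation is compatible with the coefficients $c_b^{(r,s;r',s')}$, equivalently when it fixes the associated vector-valued form on the discriminant group. The asymmetry between the $M$- and $N$-dependence --- the appearance of $MN$ for $g_1$ but the split $\Gamma_1(N)\cap\Gamma^1(M)$ for $g^1$ --- is exactly what makes this sector bookkeeping the crux, and we would use the $z\to0$ degeneration \eqref{eq:PhiMNatz=0} into $f^{(k)}(\tau)\,g^{(k)}(\sigma)$ as an independent consistency check on the levels attached to the $\tau$- and $\sigma$-directions.
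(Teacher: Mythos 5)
Your proposal follows essentially the same route as the paper's own proof in Appendix \ref{app:threshold_integral}: both reduce Siegel-modularity of $\Phi^{(M,N)}_k$ to invariance of Sen's threshold integral, which in turn amounts to checking, generator by generator, that the candidate element of $\mathrm{Sp}(4,\Z)$ preserves the constrained lattice of vectors $(m_1,m_2,n_1,n_2,j)$ together with the twisting phases $\bold{e}^{m_1 s'/M}$ and $\bold{e}^{-m_2 s}$ --- precisely the ``sector bookkeeping'' you identify as the crux, and which fixes the levels $MN$, $(M,N)$ and $M$. The only cosmetic difference is that the paper treats the Heisenberg elements $g_3$, $g^3$ by the same explicit conjugation $\mathscr{M}\mapsto g\mathscr{M}g^{T}$ of the lattice matrix rather than by reading quasi-periodicities off the infinite product, which keeps all five families of generators within one uniform computation.
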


\begin{proof}
    The theorem is proved in appendix \ref{app:threshold_integral}. In the special case when $N=1$ or $M=1$, this theorem follows from \cite{jatkar2006dyon,david2006chl,david2006dyon,sen2008black,bhand2023mock}.
\end{proof}

\subsection{Fourier-Jacobi coefficients of the partition function}
We are now ready to describe the Fourier-Jacobi coefficients of the partition function. Let us consider the element 
\begin{equation}
    \begin{pmatrix}
 		1 & 0 & 1 & 1 \\ 0 & 1 & 1 & M \\ 0 & 0 & 1 & 0 \\ 0 & 0 & 0 & 1
 	\end{pmatrix}=g_1\big((\begin{smallmatrix}
 	1&1\\0&1    
 	\end{smallmatrix})\big)\cdot g_3(0,1,0)\cdot g_3(0,0, M)\in G.
\end{equation}
% Since $\chi_M((\begin{smallmatrix}
%  	1&a\\0&1    
%  	\end{smallmatrix}))=e^{2\pi i a/M}$, if we choose $a=M,\mu=1,\kappa=t$ then the character evaluates to $1$. 
From this, we get periodicities of $\Phi_k^{(M,N)}$ to be
\begin{equation}
    \Phi_k^{(M,N)}(\tau+1,z,\sigma)=\Phi_k^{(M,N)}(\tau,z+1,\sigma)=\Phi_k^{(M,N)}(\tau,z,\sigma+M).
\end{equation}

The following two propositions form the basis of our main results:
\begin{prop}
    The partition function ${\Phi_k^{(M,N)}}^{-1}$ admits the Fourier-Jacobi expansions,
\begin{equation}\label{eq:Fourier_Jacobi_PhiMN}
\begin{split}
\frac{1}{\Phi_k^{(M,N)}(\tau,z,\sigma)} &=\sum_{m=-\alpha}^\infty\phi_m(\sigma,z)q^{m},\qquad \Img(\tau)\to\infty, \\ \frac{1}{\Phi_k^{(M,N)}(\tau,z,\sigma)} &=\sum_{n=-\gamma M}^\infty\psi_n(\tau,z)p^{n/M},\qquad \Img(\sigma)\to\infty,
\end{split}
\end{equation} for some functions $\phi_m$ and $\psi_n$.
\end{prop}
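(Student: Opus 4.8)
The plan is to read off both expansions directly from the infinite product \eqref{eq:product_PhiMN}, by isolating the leading power of the relevant nome and inverting a convergent product. Since the argument for the two expansions is symmetric under $q \leftrightarrow p$, I would carry out the $q$-expansion in full and then indicate the changes for the $p^{1/M}$-expansion.

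For the first expansion, observe that in \eqref{eq:product_PhiMN} the only $q$-dependence enters through the factors $q^\ell$ with $\ell \in \Z_{\geq 0}$. First I would split the product into its $\ell = 0$ and $\ell \geq 1$ parts and absorb the prefactor, writing
\begin{equation*}
\Phi^{(M,N)}_k = q^\alpha\, A_0(z,\sigma)\!\!\prod_{\ell \geq 1}[\cdots]^a, \qquad A_0(z,\sigma) \coloneqq y^\beta p^\gamma \!\!\prod_{\ell = 0}[\cdots]^a,
\end{equation*}
where $[\cdots]$ and the suppressed indices $b,r,r',k',j$ range as in \eqref{eq:product_PhiMN}. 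Each factor in the $\ell \geq 1$ part is of the form $1 + O(q)$, so the whole $\ell \geq 1$ product is $1 + O(q)$; hence $\Phi^{(M,N)}_k = q^\alpha A_0(z,\sigma)\big(1 + O(q)\big)$. Provided $A_0 \not\equiv 0$, I may invert to obtain
\begin{equation*}
\frac{1}{\Phi^{(M,N)}_k} = q^{-\alpha}\, A_0(z,\sigma)^{-1}\big(1 + O(q)\big)^{-1} = q^{-\alpha}\sum_{i \geq 0} B_i(z,\sigma)\, q^i,
\end{equation*}
where the geometric-type inverse is again a power series in $q$ whose coefficients are meromorphic in $(z,\sigma)$. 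Setting $\phi_m(\sigma,z) \coloneqq B_{m+\alpha}(z,\sigma)$ and using $\alpha \in \Z$ yields the stated expansion with lower bound $m = -\alpha$. For the second expansion I would repeat the argument with $p$ in place of $q$: the $p$-dependence enters only through $p^{k'}$ with $k' \in \Z_{\geq 0} + r'/M$, the $k'=0$ part (which forces $r'=0$) together with $y^\beta p^\gamma$ supplies the leading coefficient, and the $k'>0$ part is $1 + O(p^{1/M})$. Because $M\gamma \in \Z$, all powers of $p$ lie in $\tfrac1M\Z$ and the series starts at $p^{-\gamma} = p^{-\gamma M/M}$, i.e. at $n = -\gamma M$.

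The step I expect to be the main obstacle is the justification of these formal manipulations: that the product \eqref{eq:product_PhiMN} converges in the relevant limit $\Img(\tau) \to \infty$ (resp. $\Img(\sigma) \to \infty$), that the rearrangement of the convergent product by powers of $q$ (resp. $p$) is legitimate, and above all that the leading coefficient $A_0$ (resp. its $p$-analogue) is not identically zero, so that its inverse is a genuine meromorphic function. For the last point I would observe that each factor $\big[1 - \bold{e}^{r/N} y^j p^{k'}\big]$ is nonvanishing away from the zero locus \eqref{eq:zeroes}, so $A_0^{-1}$, and hence the $\phi_m$ and $\psi_n$, are meromorphic with poles confined to that locus. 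As an independent consistency check, the periodicities $\Phi^{(M,N)}_k(\tau+1,z,\sigma) = \Phi^{(M,N)}_k(\tau,z,\sigma+M)$ already force any such expansion to run over integer powers of $q$ and over $\tfrac1M\Z$-powers of $p$, matching the indices $m \in \Z$ and $n/M \in \tfrac1M\Z$ appearing above.
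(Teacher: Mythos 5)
Your proposal is correct, but it takes a genuinely different route from the paper. You construct the expansion bottom-up: split the product \eqref{eq:product_PhiMN} into its $\ell=0$ and $\ell\geq 1$ parts, invert the leading factor $A_0$, and expand $(1+O(q))^{-1}$ as a geometric-type series, with the analogous argument in $p^{1/M}$. The paper instead argues top-down by complex analysis: for fixed $(z,\sigma)$ the function $q^{\alpha}{\Phi_k^{(M,N)}}^{-1}$ (resp.\ $p^{\gamma}{\Phi_k^{(M,N)}}^{-1}$) is single-valued in $q$ (resp.\ $p^{1/M}$) and, by the zero locus \eqref{eq:zeroes}, its poles lie at finite distance from the origin of the $q$- (resp.\ $p$-) plane; holomorphy in a punctured neighborhood of the origin plus single-valuedness then yields the Taylor expansion at once, with the product formula needed only to fix the leading exponents $-\alpha$ and $-\gamma$. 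The comparison is instructive because the two proofs swap the roles of the same two facts: the periodicity $\Phi_k^{(M,N)}(\tau+1,z,\sigma)=\Phi_k^{(M,N)}(\tau,z,\sigma+M)=\Phi_k^{(M,N)}(\tau,z,\sigma)$, which you relegate to a ``consistency check,'' is precisely the single-valuedness that powers the paper's argument, while the pole structure \eqref{eq:zeroes}, which the paper cites as the key input, enters your argument only implicitly through the convergence of the inversion $(1+O(q))^{-1}$. What your route buys is an explicit, constructive handle on the coefficients $\phi_m$, $\psi_n$ directly in terms of the product data; what it costs is exactly the obstacle you flag --- justifying convergence and rearrangement of the infinite product and the non-vanishing of $A_0$ --- all of which the paper's Laurent-series argument sidesteps, since it never manipulates the product termwise. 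If you want to close your own gap cleanly, the shortest repair is in fact to import the paper's observation: absolute convergence of the product for $\Img(\tau)$ large, together with the absence of zeros of $\Phi_k^{(M,N)}$ near $q=0$ (poles of the inverse at finite distance from the origin), makes your formal inversion a convergent one.
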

\begin{proof}
    The steps of the proof are identical to that in \cite[Proposition 5.1]{bhand2023mock}. Briefly, it follows from the fact that the function $q^\alpha {\Phi_k^{(M,N)}}^{-1}$ (respectively $p^\gamma {\Phi_k^{(M,N)}}^{-1}$) has poles in the $q$ (respectively $p$) plane at finite distance from the origin, seen from equation \eqref{eq:zeroes}, and it is single-valued as a function of $q$ (respectively $p^{1/M}$), thus admitting a Taylor series expansion as above.
\end{proof}

\begin{prop}
Let $\phi_m$, $\psi_n$ be defined as in equation \eqref{eq:Fourier_Jacobi_PhiMN}. Then,
\begin{enumerate}
    \item $\phi_m$ is a meromorphic Jacobi form of weight $-k$ and index $m$ for $\Gamma(M,N)\ltimes \Z^2$ with poles at $z\in\Z\sigma+\Z$, and 
    \item $\psi_n$ is a meromorphic Jacobi form of weight $-k$ and index $n/M$ for $\Gamma_1(MN)\ltimes(M\Z\times\Z)$ with poles at $z\in M\Z\tau+\Z$.
\end{enumerate}   
\end{prop}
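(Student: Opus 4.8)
The plan is to derive all three defining properties of each Fourier-Jacobi coefficient — the weight/modularity transformation, the elliptic transformation, and the pole locations — directly from the Siegel transformation law of $\Phi_k^{(M,N)}$ under the generators of $G$ listed in Theorem \ref{thm:invariancegroup_PhiMN}. Concretely, I would substitute a generator into the reciprocal of $\Phi_k^{(M,N)}\big((A\Omega+B)(C\Omega+D)^{-1}\big)=\det(C\Omega+D)^k\,\Phi_k^{(M,N)}(\Omega)$, expand both sides in the relevant Fourier-Jacobi variable, and match coefficients. The two statements are mirror images under $\tau\leftrightarrow\sigma$, the only asymmetry being the factors of $M$ distinguishing the $q$- and $p^{1/M}$-expansions; I would carry out the $\phi_m$ case in full and obtain the $\psi_n$ case by the analogous computation. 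Since $1/\Phi_k^{(M,N)}$ is meromorphic of weight $-k$ with poles only on the divisor \eqref{eq:zeroes}, meromorphy and weight $-k$ are inherited by every coefficient, while the lower bounds on $m$ and $n$ from the previous Proposition supply the control of growth needed for the cusp condition.

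For the modularity of $\phi_m(\sigma,z)$ I would feed the embedding $g^1(\gamma)$, $\gamma=\gam\in\Gamma(M,N)$, into the transformation law. Computing $(A\Omega+B)(C\Omega+D)^{-1}$ for $g^1$ gives the action $\sigma\mapsto\frac{a\sigma+b}{c\sigma+d}$, $z\mapsto\frac{z}{c\sigma+d}$, $\tau\mapsto\tau-\frac{cz^2}{c\sigma+d}$ with automorphy factor $(c\sigma+d)^k$. Because the shift of $\tau$ is itself independent of $\tau$, it only multiplies the coefficient of $q^m$ by $\bold{e}^{-mcz^2/(c\sigma+d)}$; matching powers of $q$ then yields $\phi_m\big(\frac{a\sigma+b}{c\sigma+d},\frac{z}{c\sigma+d}\big)=(c\sigma+d)^{-k}\,\bold{e}^{mcz^2/(c\sigma+d)}\,\phi_m(\sigma,z)$, which is exactly the weight $-k$, index $m$ modularity of Definition \ref{jacobi}. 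The identical computation with $g_1(\gamma)$, $\gamma\in\Gamma_1(MN)$ (same action with $\tau$ and $\sigma$ interchanged), matched against the $p^{n/M}$-expansion, produces the weight $-k$, index $n/M$ transformation of $\psi_n(\tau,z)$.

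For the elliptic property of $\phi_m$ I would use $g^3(\lambda,\mu,\kappa)$, $\lambda,\mu,\kappa\in\Z$, which is parabolic ($C=0$, hence no automorphy factor) and acts by $z\mapsto z+\lambda\sigma+\mu$, $\sigma\mapsto\sigma$, $\tau\mapsto\tau+2\lambda z+\lambda^2\sigma+\lambda\mu+\kappa$. Writing the invariance of $1/\Phi_k^{(M,N)}$ and matching the coefficient of $q^m$ (using $\bold{e}^{\lambda\mu}=\bold{e}^{\kappa}=1$) gives $\phi_m(\sigma,z+\lambda\sigma+\mu)=\bold{e}^{-m(\lambda^2\sigma+2\lambda z)}\phi_m(\sigma,z)$, the index $m$ elliptic condition \eqref{ellipticcond} for the lattice $\Z^2$. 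The same step with $g_3(\lambda M,\mu,\kappa M)$, which acts by $z\mapsto z+\lambda M\tau+\mu$, matched against the $p^{n/M}$-expansion, gives $\psi_n(\tau,z+\lambda M\tau+\mu)=\bold{e}^{-\frac{n}{M}((\lambda M)^2\tau+2\lambda Mz)}\psi_n(\tau,z)$, i.e. the index $n/M$ elliptic condition for the lattice $M\Z\times\Z$; the restriction $\lambda M\in M\Z$ is precisely why the lattice is $M\Z\times\Z$ rather than $\Z^2$.

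The remaining and most delicate point is locating the poles, and this is where I expect the real work to lie. From the $z\to 0$ expansion \eqref{eq:PhiMNatz=0}, $\Phi_k^{(M,N)}$ has a double zero along $z=0$, so $1/\Phi_k^{(M,N)}$ — and hence each coefficient — has a double pole there; the elliptic transformations just derived then propagate this pole to all of $\Z\sigma+\Z$ for $\phi_m$ and all of $M\Z\tau+\Z$ for $\psi_n$. To see that there are no others one must identify which components of the zero divisor \eqref{eq:zeroes} survive in each Fourier-Jacobi limit: the simple zeros relevant here have $n_2=0$, $j=\pm 1$ and $m_1n_1=0$, so they lie at $z=n_1\sigma-m_1\tau+m_2$. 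As $\Img(\tau)\to\infty$ only the components with $m_1=0$ remain at finite $z$, giving $z\in\Z\sigma+\Z$ for $\phi_m$; as $\Img(\sigma)\to\infty$ only those with $n_1=0$ survive, giving $z\in M\Z\tau+\Z$ (recall $m_1\in M\Z$) for $\psi_n$. Making this ``visibility'' argument rigorous — justifying that an individual Fourier-Jacobi coefficient acquires poles only from the divisor components that stay within the domain of convergence of its expansion, exactly as in \cite{dabholkar2012quantum,bhand2023mock} — is the step I would treat most carefully.
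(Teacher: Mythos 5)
Your proposal is correct and follows essentially the same route as the paper, whose proof is a one-line deferral to the analogous result in \cite{bhand2023mock}: derive modularity from the embeddings $g^1(\Gamma(M,N))$ and $g_1(\Gamma_1(MN))$, ellipticity from $g^3(\lambda,\mu,\kappa)$ and $g_3(\lambda M,\mu,\kappa M)$ of Theorem \ref{thm:invariancegroup_PhiMN}, and the pole locations from the $\tau$-independent (respectively $\sigma$-independent) components of the divisor \eqref{eq:zeroes}, exactly as in the cited argument. Your identification of the surviving divisor components ($n_2=0$, $m_1=0$ forcing $j=\pm 1$ for $\phi_m$, and $n_2=0$, $n_1=0$ with $m_1\in M\Z$ for $\psi_n$) matches the stated pole sets, so the proposal fills in precisely the computation the paper leaves implicit.
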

\begin{proof}
    The proof is similar to that of \cite[Theorem 5.2]{bhand2023mock} and follows from the properties derived in theorem \ref{thm:invariancegroup_PhiMN} %\cv{add sketch.}
\end{proof}

We can now apply theorems \ref{dmzmaintheorem} and \ref{dmzmaintheorem2} to  decompose the meromorphic Jacobi forms $\phi_m$, $\psi_n$ into finite and polar parts and show that the completion of the finite parts are mock Jacobi forms.

In particular, the finite parts of $\phi_m,\psi_n$ is given by 
\begin{align}
\phi^F_m (\sigma,z) \coloneqq \sum \limits_{\ell \in \Z / 2m\Z} h^{\star, \phi}_{\ell} (\sigma) \, \vartheta_{m , \ell} (\sigma, z),
\\
\psi^F_n (\tau,z) \coloneqq \sum  \limits_{\ell \in \Z / 2n\Z} h^{\star, \psi}_{\ell} (\tau) \, \widetilde{\vartheta}_{\frac{n}{M} , \ell} (\tau, z),
\end{align}
where $h_\ell^{\star}$ is defined in \eqref{merthetacoeff}. To describe the polar parts explicitly, we expand $\phi_m$ and $\psi_n$ at $z=0$ as follows: 
\begin{align}
\phi_m\left(\sigma, \varepsilon\right)&=\frac{E^\phi_{0,m}(\sigma)}{(2 \pi i \varepsilon)^{2}}+\frac{D^\phi_{0,m}(\sigma)}{2 \pi i \varepsilon}+O(1) \quad  \text{as} \quad \varepsilon\to 0,
\\
\psi_n\left(\tau, \varepsilon\right)&=\frac{E^\psi_{0,n/M}(\tau)}{(2 \pi i \varepsilon)^{2}}+\frac{D^\psi_{0,n/M}(\tau)}{2 \pi i \varepsilon}+O(1) \quad \text{as} \quad \varepsilon\to 0.
\label{eq:}
\end{align}
From the behaviour at $z \to 0$ as in \eqref{eq:PhiMNatz=0}, we read off the residues. First, let us write,
\begin{equation}
\frac{1}{\Phi_k^{(M,N)}(\tau, z, \sigma)} \simeq-\frac{1}{4 \pi^2} \tilde{f}^{(k)}(\tau) \tilde{g}^{(k)}(\sigma) \frac{1}{z^2}+O\left(z^{-4}\right),    
\end{equation}
where $\tilde{f}^{(k)}, \tilde{g}^{(k)}$ are the inverses of the modular forms \eqref{eq:f_kg_kdef}:
\begin{equation}
\tilde{f}^{(k)}(\tau):=\frac{1}{f^{(k)}(\tau)}, \qquad \tilde{g}^{(k)}(\sigma):=\frac{1}{g^{(k)}(\sigma)} .
\end{equation}

Thus we have
\begin{equation}
E_{0,m}^\phi(\sigma)=\hat{f}^{(k)}_m \tilde{g}^{(k)}(\sigma), \qquad D_{0,m}^\phi(\sigma)=0
\end{equation}

where $\hat{f}^{(k)}_m$ is the coefficient of $q^m$ in the expansion of $\tilde{f}^{(k)}(\tau)$. Similarly,
\begin{equation}
E_{0,n / M}^\psi(\sigma)=\hat{g}^{(k)}_n \tilde{f}^{(k)}(\tau), \qquad D_{0,n / M}^\psi(\sigma)=0,
\end{equation}
where $\hat{g}^{(k)}_n$ is the coefficient of $p^{n / M}$ in the expansion of $\tilde{g}^{(k)}(\sigma)$. 

The polar parts are then computed to be given by
\begin{equation}
\begin{aligned}
& \phi_m^P(\sigma, z) \coloneqq \hat{f}^{(k)}_m \tilde{g}^{(k)}(\sigma) \sum_{s \in \mathds{Z}} \frac{p^{m s^2+s} y^{2 m s+1}}{\left(1-p^s y\right)^2}, \\
& \psi_n^P(\tau, z) \coloneqq \hat{g}^{(k)}_n \tilde{f}^{(k)}(\tau) \sum_{s \in \mathds{Z}} \frac{q^{M\left(n s^2+s\right)} y^{2 n s+1}}{\left(1-q^{M s} y\right)^2}.
\end{aligned}
\end{equation}

To summarize, it is shown by theorem \ref{dmzmaintheorem} that
\begin{equation}
    \phi_m = \phi^F_m + \phi^P_m, \qquad \psi_n = \psi^F_n + \psi^P_n
\end{equation} where $m$ and $n$ label the Fourier-Jacobi coefficients of ${\Phi_k^{(M,N)}}^{-1}$ as in equation (\ref{eq:Fourier_Jacobi_PhiMN}), and the finite and polar parts are described as above. Then, by theorem \ref{dmzmaintheorem2}, the finite parts of $\phi_m$ and $\psi_n$ are mixed mock Jacobi forms, and their completions, as described in \eqref{eq:finitepartcompletion}, are Jacobi forms of weight $-k$ and index $m$ and $n/M$ respectively, over the groups $\Gamma(M,N) \ltimes \Z^2$ and $\Gamma_1(MN) \ltimes (M\Z \times \Z)$. respectively

\subsection{Single-centered degeneracies}
We are now in a position to extract the degeneracies of the single-centered black holes. We begin with the following definition:
\begin{definition}
    We define the following sets of triplets
\begin{multline}
        A \coloneqq \bigg\{(n,\ell,m) \in \Z^3 \bigg| \frac{4nm}{M}-\ell^2 > 0, \\ \min_{r \in M\Z, \, s\in \Z, \, s(s+1) = 0 \, (\mathrm{mod} \, {r})}\bigg[\bigg(\frac{r\ell}{m} + 2s + 1\bigg)^2 + \frac{r^2}{m^2}\bigg(\frac{4nm}{M}-\ell^2 \bigg)\bigg] \geq 1 \bigg\}, \label{eq:setofcharges_1}
\end{multline} and
\begin{multline}
        B \coloneqq \bigg\{(n,\ell,m) \in \Z^3 \bigg| \frac{4nm}{M}-\ell^2 > 0, \\ \min_{r \in M\Z, \, s \in \Z, \, s(s+1) = 0 \, (\mathrm{mod} \, {r/M})}\bigg[\bigg(\frac{r\ell}{n} - 2s - 1\bigg)^2 + \frac{r^2}{n^2}\bigg(\frac{4nm}{M}-\ell^2 \bigg)\bigg] \geq 1 \bigg\}. \label{eq:setofcharges_2}
\end{multline}
\end{definition}

We now have all the ingredients to prove our main theorem.

\begin{theorem}\label{thm:mock_degen_AB} Let the black hole charges be parametrized as $P^2 = 2n$, $Q^2 = \dfrac{2m}{M}$ and $Q \cdot P = \ell$. Expand the mock Jacobi forms $\phi_m^F$ and $\psi_n^F$ as
\begin{equation}
    \phi_m^F(\sigma,z)=\sum_{n,\ell}c^F_{\phi_m}(n,\ell)p^{n/M}y^\ell,\qquad \psi^F_n(\tau,z)=\sum_{m,\ell}c_{\psi_n}^F(m,\ell)q^my^\ell.
\end{equation}
Then,
\begin{enumerate}[(a)]
    \item For $(n,\ell, m) \in A$, the single-centered black hole degeneracy $d(\vec{Q},\vec{P})$ is given by the Fourier coefficient of the mock Jacobi form $\phi^F_m$ as
    \begin{equation}
    	d(\vec{Q},\vec{P}) = \frac{(-1)^{\ell+1}}{M}c_{\phi_m}^F(n,\ell).
    \end{equation}
    \item For $(n,\ell, m) \in B$, the single-centered black hole degeneracy $d(\vec{Q},\vec{P})$ is given by the Fourier coefficient of the mock Jacobi form $\psi^F_n$ as
    \begin{equation}
    	d(\vec{Q},\vec{P}) = \frac{(-1)^{\ell+1}}{M}c_{\psi_n}^F(m,\ell).
    \end{equation}
\end{enumerate}
\end{theorem}

\begin{proof}
    We follow the proof of \cite[Theorem 5.4]{bhand2023mock}. In order to find the single-centered black hole degeneracies, we must perform the contour integral (\ref{degeneracyformula}). To do this while using the Fourier-Jacobi expansion (\ref{eq:Fourier_Jacobi_PhiMN}), we must deform the contour $\mathcal{C}$ to an identical contour $\mathcal{C}'$ but at $\Img(\tau) \to \infty$. It is shown in the appendix \ref{prop:deformationofcontour} that for the set of charges $A$ (defined in (\ref{eq:setofcharges_1} and \ref{eq:setofcharges_2})), the partition function ${\Phi^{(M,N)}_k}^{-1}$ does not have any poles in the region bounded by $\mathcal{C}$ and $\mathcal{C}'$. Thus, we restrict our analysis to this set of charges and proceed. We have,
\begin{align}
	d(n,\ell,m) &= \frac{1}{M}(-1)^{\ell + 1} \int_{\mathcal{C}'} d\tau dz d\sigma \, q^{-m} \, y^{-\ell} \, p^{-n/M} \sum_{a=-\alpha}^\infty\phi_a(\sigma,z)q^a, \\
    &= \frac{1}{M}(-1)^{\ell + 1} \int_{\mathcal{C}(n,\ell)} dz d\sigma \, y^{-\ell} \, p^{-n/M} \phi_m(\sigma,z),
\end{align} where $\mathcal{C}(n,\ell)$ is the projection of $\mathcal{C}$ onto the $\sigma-z$ hypersurface. Next, we perform the $z$-integral. Note that for the attractor contour (\ref{eq:attractorcontour}),
\begin{equation}
    \Img(z) = -\frac{\ell}{2m} \Img(\sigma).
\end{equation} This motivates shifting the integral over $z$ by a constant,
\begin{equation}
    -\frac{\ell}{2m} \Rel(\sigma) \leq \Rel(z) \leq -\frac{\ell}{2m} \Rel(\sigma) + 1.
\end{equation} This is valid because we do not cross any pole in doing so. This simplifies the contour integral as,
\begin{align}
	d(n,\ell,m) &= \frac{1}{M}(-1)^{\ell + 1} \int_{\mathcal{R}_N} d\sigma \, p^{-n/M} \int_{-\frac{\ell}{2m} \sigma}^{-\frac{\ell}{2m} \sigma + 1} dz \, y^{-\ell} \phi_m(\sigma,z), \\
    &= \frac{1}{M}(-1)^{\ell + 1} \int_{\mathcal{R}_N} d\sigma \, p^{-n/M + \ell^2/4m} h_\ell^\star (\sigma),
\end{align} where $\mathcal{R}_t$ is any horizontal line of length $t$ in the corresponding complex plane, and $h_\ell^\star$ is defined in (\ref{merthetacoeff}).

Now recall the definition of the Fourier part of the Jacobi form $\phi_m$,
\begin{align}
    \phi^F_m (\sigma, z) &= \sum_{\ell = 0}^{2m - 1}  h_\ell^\star (\sigma) \vartheta_{m,\ell} (\sigma, z) \\
    &= \sum_{\ell = 0}^{2m - 1} \sum_{r \in \Z} p^{(\ell + 2mr)^2/4m} y^{\ell + 2mr} h_\ell^\star (\sigma) \\
    &= \sum_{\ell \in \Z} p^{\ell^2/4m} y^\ell h_\ell^\star (\sigma).
\end{align}

Inserting this in the contour integral we get,
\begin{align}
	d(n,\ell,m) &= \frac{1}{M}(-1)^{\ell + 1} \int_{\mathcal{R}_N} d\sigma \int_{\mathcal{R}_1} dz \, p^{-n/M} y^{-\ell} \phi^F_m (\sigma, z).
\end{align}

This proves part (a) of the theorem. Part (b) is proven analogously. This time we deform the attractor contour $\mathcal{C}$ to an identical one $\mathcal{C}''$ except that $\Img(\sigma) \to \infty$. It is shown in appendix \ref{prop:deformationofcontour2} that for the set of charges $B$ (defined in \eqref{eq:setofcharges_1} and \eqref{eq:setofcharges_2}), the partition function does not have any poles in the region bounded by $\mathcal{C}$ and $\mathcal{C}''$, allowing us to perform the deformation without picking up any residues. Then, following the same steps as before, we get,
\begin{align}
	d(n,\ell,m) &= \frac{1}{M}(-1)^{\ell + 1} \int_{\mathcal{R}_1} d\tau \, q^{-m + \ell^2M/4n} h_\ell^\star (\tau).
\end{align}

Again, we find the Fourier part of the Jacobi form $\psi_n$,
\begin{align}
    \psi^F_n (\tau, z) &= \sum_{\ell = 0}^{2n - 1}  h_\ell^\star (\tau) \widetilde{\vartheta}_{\frac{n}{M},\ell} (\tau, z) \\
    &= \sum_{\ell = 0}^{2n - 1} \sum_{r \in \Z} q^{(\ell + 2nr)^2 M/4n} y^{\ell + 2nr} h_\ell^\star (\tau) \\
    &= \sum_{\ell \in \Z} q^{\ell^2 M/4n} y^\ell h_\ell^\star (\tau).
\end{align}

Inserting this in the contour integral we get,
\begin{align}
	d(n,\ell,m) &= \frac{1}{M}(-1)^{\ell + 1} \int_{\mathcal{R}_1} d\tau \int_{\mathcal{R}_1} dz \, q^{-n} y^{-\ell} \psi^F_m (\tau, z).
\end{align}
\end{proof}

As in the CHL case \cite{bhand2023mock}, we find that there are some charges whose degeneracies cannot be directly related to mock Jacobi forms \textit{without using $S$-duality}. This is due to the wall crossing behaviour of the degeneracies across the walls of marginal stability. More precisely, suppose for a charge $(n,\ell,m)$, we encounter a pole while deforming the contour $\CC$ to either $\CC'$ or $\CC''$, then we pick up the residue at the pole and the degeneracy of the black holes with charge $P^2=2n,Q^2=2m/M,P\cdot Q=\ell$ is not correctly captured by the corresponding coefficient of the mock Jacobi forms appearing in theorem \ref{thm:mock_degen_AB}. The polar part of the Siegel modular form cannot remove the singular contributions for these special charges. As shown in \cite{bhand2023mock}, the residue at the pole is related to the wall-crossing behaviour via the prescription given in \cite{Mandal:2010cj,sen2008black}. These charge sectors need further analysis and we postpone that study for the future.  In particular, we would like to understand whether a charge which is not in the sets $A$ or $B$ can be mapped to a charge in $A$ or $B$ using $S$-duality. 
  
\section{Dyon partition function through lifts}\label{sec:4}
In this section, we provide a multiplicative lift construction of the partition function for $\CM=\text{K}3$ and comment on the existence of an additive lift construction. The main tool is the notion of multiplicative lifts constructed in \cite{Persson:2013xpa}. Toward the end, we present a comparison with the additive lift partition function for the system for specific values of $(M,N)$ in \cite{govindarajan2011bkm}.

\subsection{Multiplicative and additive lifts}
In this section, we will review the construction of multiplicative lift as presented in \cite{Persson:2013xpa} in the simplified setup relevant for our purposes. The study requires implementation of the Mathieu Moonshine theorem, as we discuss below.

\subsubsection{Twisted-twined elliptic genus and Mathieu Moonshine}
Let us begin with the definition of \textit{Ramond representation}: a representation of the superconformal algebra is called Ramond if the action of the supercurrent is integer modded. Mathieu Moonshine theorem \cite{Eguchi:2010ej,Eguchi:2010fg,Gaberdiel:2010ca,Gaberdiel:2010ch,Cheng:2010pq,Gannon:2012ck} states that there exists a unitary Ramond representation $\mathcal{H}$ of the $\mathcal{N}=4$ superconformal algebra at central charge $c=6, \mathds{Z}_2$-graded by the right-moving fermion number $(-1)^{\bar{F}}$, that carries a non-trivial action of the Mathieu group $\mathds{M}_{24}$ commuting with both the $\mathcal{N}=4$ algebra and the $\mathds{Z}_2$-grading. Then for each $g \in \IM_{24}$, the twined elliptic genus
defined by
\begin{equation}
\phi_g(\tau, z):=\mathrm{Tr}_{\mathcal{H}}\left(g q^{L_0-\frac{c}{24}} y^{J_L}(-1)^{F+\bar{F}}\right), \qquad g \in \IM_{24},
\end{equation}
is a weak Jacobi form of weight 0 and index 1 (possibly with a multiplier) under $\Gamma_0(N)$, where $N$ is the order of $g$. In particular, $\phi_e$ is the elliptic genus of K3. 

The generalized Mathieu Moonshine conjecture was proposed in \cite{Gaberdiel:2012gf}. It states that for every $g\in\IM_{24}$, there exists a $\IZ_2$-graded unitary Ramond representation $\mathcal{H}_g$ of the $\mathcal{N}=4$ superconformal algebra at central charge $c=6$, which is also a projective representation\footnote{A complex representation $(\rho,V)$ of a group $G$ is called projective if $\rho(g_1g_2)=e^{i\alpha(g_1,g_2)}\rho(g_1)\rho(g_2)$ for some function $\alpha:G\times G\to \IR$.} $\rho_g: C_{\IM_{24}}(g) \longrightarrow \mathrm{GL}\left(\mathcal{H}_g\right)$ of the centralizer\footnote{The centralizer of a subset $S$ of a group $G$ is defined as $C_G(S)=\{g \in G:g s=s g$ for all $s \in S\}$.} of $g$ in $\IM_{24}$. The operators $\rho_g(h)$ commute with the $\mathcal{N}=4$ algebra and with the right-moving fermion number $(-1)^{\bar{F}}$. The twisted-twined elliptic genus 
defined for each pair of commuting elements $g, h \in \IM_{24}$ by
\begin{equation}\label{eq:phi_gh}
\phi_{g, h}(\tau, z):=\operatorname{Tr}_{\mathcal{H}_g}\left(\rho_g(h) q^{L_0-\frac{c}{24}} y^{J_L}(-1)^{F+\bar{F}}\right),
\end{equation}
is an elliptic function. With respect to $\mathrm{SL}(2,\IZ)\times\IM_{24}$, it is invariant under the $\alpha$-twisted slash operator, where $\alpha\in H^3(\IM_{24},\mathrm{U}(1))$ is the group cohomology class which classifies the central extension of $\IM_{24}$ by U(1) corresponding to the projective representation $\rho_g$. We refer the readers to \cite{Gaberdiel:2012gf,Persson:2013xpa} for the precise definition. Strong evidence was given for the generalized Mathieu Moonshine conjecture but it is not proved. The spaces $\CH,\CH_{g}$ in these Moonshine phenomena have not been explicitly constructed and is a wide open problem.

Since the original Mathieu Moonshine phenomena was observed in the elliptic genus of K3 surface \cite{Eguchi:2010ej}, one would expect that the (twisted) Ramond sector Hilbert space of the SCFT with target space K3 might be a good candidate for the spaces $(\CH_g)~\CH$. But it was observed in \cite{Gaberdiel:2011fg,Gaberdiel:2012um} that no known SCFT with target space a K3 surface has $\IM_{24}$ as the symmetry group. For our purposes, this problem does not arise as we now explain. Any symplectic automorphism of K3 surface automatically induces a symmety of the SCFT. By Mukai's theorem \cite{Mukai1988} any subgroup of $\IM_{23}$ can be realized as a subgroup of symplectic automorphisms of a K3 surface. Thus, for any $g\in\IM_{23}$, the space $\CH_{g}$ is the RR sector of a K3 sigma model. Thus, for our cases, the twisted-twined elliptic genus can be explicitly computed and we present the same in the next subsection. 

\subsubsection{Construction of the lifts}
Let us consider the twisted-twined elliptic genus $\phi_{g,h}$ for a pair of commuting elements $g,h\in\IM_{23}$. Expanding $\phi_{g, h}(\tau, z)$ as\footnote{Note that $\lambda=1$ compared to eq. (3.44) of \cite{Persson:2013xpa} since the multiplier is trivial for $g\in\IM_{23}$, see appendix D.2.1 of \cite{Persson:2013xpa}.}
\begin{equation}
\phi_{g, h}(\tau, z)=\sum_{n \in \mathbb{Z} / M} \sum_{\ell \in \mathbb{Z}} c_{g, h}(n, \ell) q^n y^l.
\end{equation}
Here we define\footnote{Compared to eq. (3.49) of \cite{Persson:2013xpa}, we have changed variable $b \rightarrow-b$ for ease of comparison.}
\begin{equation}
\hat{c}_{g, h}(d, m, \ell, t):=\frac{1}{M N} \sum_{k=0}^{N-1} \sum_{b=0}^{M-1} \bold{e}^{-\frac{t k}{N}} \bold{e}^{-\frac{b m}{M}} c_{g^d, g^{b} h^k}(m d / M, \ell).
\end{equation}
Then the multiplicative lift of $\phi_{g, h}$ is defined by\footnote{\label{foot:sig_tau_ex}Compared to eq. (3.51) of \cite{Persson:2013xpa} we have changed $\sigma \leftrightarrow \tau$ for ease of comparison.}:
\begin{equation}\label{eq:mult_lift_gh}
\Phi_{g, h}(\tau, z, \sigma)=\mathrm{Mult}[\phi_{g,h}]:=q p^{\frac{1}{M}} y \prod_{(d, m, \ell)>0} \prod_{t=0}^{N-1}\left(1-\bold{e}^{\frac{t}{N}} q^d y^\ell p^{\frac{m}{M}}\right)^{\hat{c}_{g, h}(d, m, \ell, t)},
\end{equation}
where $(d, m, \ell)>0$ means
\begin{equation}
\begin{aligned}
& d, m \in \Z_{\geq 0} \text { and }\left\{\begin{array}{l}
\ell \in \Z, \qquad \ell<0 \qquad \text { if } m=d=0, \\
\ell \in \Z, \qquad \text { otherwise. }
\end{array}\right.
\end{aligned}
\end{equation}
$\phi_{g,h}$ is called the multiplicative seed function of $\Phi_{g,h}$. 

Let us also define the \textit{second quantized twisted-twined elliptic genus} \cite{Persson:2013xpa}. For this we first need to define the action of \textit{equivariant Hecke operators}\footnote{Equivariant Hecke operators were first defined by Carnahan to prove the generalized Moonshine conjecture \cite{Carnahan_2010,Carnahan_2012}. They were also used to prove infinite product formulas for certain mock modular forms \cite{bhand2025generalisation}.} on $\phi_{g, h}$: for each $L \in \N$ define $\CT_L \phi_{g, h}$ by

\begin{equation}
(\CT_L \phi_{g, h})(\tau, z):=\frac{1}{L} \sum_{a d=L} \sum_{b=0}^{d-1} \phi_{g^d, g^{-b} h^a}\left(\frac{a \tau+b}{d}, a z\right).
\end{equation}
Then the second quantized twisted-twined elliptic genus is defined by
\begin{equation}
\Psi_{g, h}(\tau, z, \sigma):=\exp \left(\sum_{L=1}^{\infty} q^L\left(T_L \phi_{g, h}\right)(\sigma, z)\right).
\end{equation}
Then a tedious but straightforward calculation shows that \cite{Persson:2013xpa}
\begin{equation}
\Phi_{g, h}(\tau, z, \sigma)=\frac{q \, \psi_{g, h}(\sigma, z)}{\Psi_{g, h}(\tau, z, \sigma)}.
\end{equation}
where
\begin{equation}
\begin{aligned}\label{eq:add_seed_gh}
& \psi_{g, h}(\sigma, z):=p^{\frac{1}{M}} y \prod_{t=0}^{N-1}\left(\prod_{\ell<0}\left(1-\bold{e}^{\frac{t}{N}} y^\ell\right)^{\hat{c}_{g, h}(0,0, \ell, t)}\right)\left(\prod_{\ell \in \Z} \prod_{m=1}^{\infty}\left(1-\bold{e}^{\frac{t}{N}} p^{\frac{m}{M}} y^\ell\right)^{\left.\hat{c}_{g, h}(0, m, \ell, t\right)}\right).
\end{aligned}
\end{equation}
It can be verified that $\Phi_{g, h}(\tau, z, \sigma)$ satisfies the $S$-duality transformation
\begin{equation}
\Phi_{g, h}\left(\frac{\sigma}{M}, z, M \tau\right)=\Phi_{g, h^{\prime}}(\tau, z, \sigma),
\end{equation}
where $h'$ is some ``relabelled'' element of $\IM_{24}$ not necessarily in the same conjugacy class as $h$. In the next section, we will see examples where $h=h'$.
%In particular, if $h \notin 2 A, 4 B$ then $h=h^{\prime}$ \cite{Persson:2013xpa}.

Now observe that given a Jacobi form $\psi$ of weight $k$, index $t$, with character $\chi$ with conductor\footnote{Recall that the conductor of a Dirichlet character $\chi$ modulo $k$ is the smallest positive integer $m_\chi$ such that we can write $\chi=\chi_1\cdot\chi_2$ where $\chi_1$ is a primitive character modulo $m_\chi$.} $m_\chi$, its additive lift is defined as follows \cite{Gritsenko:1996ax,Gritsenko:1996tm}:
\begin{equation}
\Phi:=\operatorname{Add}[\psi]=\sum_{\substack{m \geq 1\\m\equiv 1\bmod m_\chi}} m^{2-k}\left(T_m^{-} \widetilde{\psi}\right)(\tau, z, \sigma),
\end{equation}
where
\begin{enumerate}
    \item $\tilde{\psi}(\tau, z, \sigma)=q \psi(\sigma, z)$.
\item $T_m^{-}$ are certain Hecke operators defined as follows:
\begin{equation}
\left(T_m^{-} \cdot \tilde{\psi}\right)(\tau, z, \sigma):=m^{2 k-3} \sum_{\substack{a d=m \\ b \bmod d}} \chi(\sigma_a)d^{-k} \psi\left(\frac{a \sigma+bm_\chi}{d}, a z\right) q^{m t},
\end{equation}
where $\sigma_a=(\begin{smallmatrix}
    a^{-1}&0\\0&a
\end{smallmatrix})$.
\end{enumerate}
$\psi$ is called the \textit{additive seed function} of $\Phi$. 

As explained in \cite{Persson:2013xpa}, if there exists an additive lift description of $\Phi_{g, h}$, then the seed function will be $\psi_{g, h}$ and hence
\begin{equation}
\Phi_{g, h}=\mathrm{Mult}\left[\phi_{g, h}\right]=\operatorname{Add}\left[\psi_{g, h}\right].
\end{equation}

\subsection{Partition function from multiplicative lift}\label{sec:mult_lift_part_fu}
Let us now compare the partition function \eqref{eq:product_PhiMN} with the multiplicative lift \eqref{eq:mult_lift_gh}. We begin by noticing that since $\phi_{g, h}$ is a Jacobi form, we have the expansion
\begin{equation}
\phi_{g, h}(\tau, z)=\sum_{n \in \Z / M} \sum_{\ell \in \Z} c_{g, h}(4 n-\ell^2) q^n y^{\ell}.
\end{equation}
Taking $g=g_M^{r^{\prime}} g_N^r$, and $h=g_M^{s^{\prime}} g_N^s$ and comparing \eqref{eq:phi_gh} with \eqref{eq:cbrsr's'def} we see that
\begin{equation}
c_{g,h}(n,\ell)=M N \cdot \begin{cases}c_0^{(r, s ; r^{\prime}, s^{\prime})}(4 n-\ell^2) ; & \ell \text { even }, \\ c_1^{(r, s ; r^{\prime}, s^{\prime})}(4 n-\ell^2) ; & \ell \text { odd .}\end{cases}
\end{equation}
Thus for $\ell \in 2 \mathbb{Z}+b$, we have
\begin{equation}
\begin{aligned}
\hat{c}_{g_M, g_N}(d, m, \ell, t) & =\frac{1}{M N} \sum_{s=0}^{N-1} \sum_{s^{\prime}=0}^{M-1} \bold{e}^{-\frac{t s}{N}} \bold{e}^{-\frac{s^{\prime} m}{M}} c_{g_M^d, g_M^{s^{\prime}} g_N^s}(m d / M, \ell) \\
& =\sum_{s=0}^{N-1} \sum_{s^{\prime}=0}^{M-1} \bold{e}^{-\frac{t s}{N}} \bold{e}^{-\frac{s^{\prime} m}{M}} c_b^{(0, s ; d, s^{\prime})}(4m d / M-\ell^2).
\end{aligned}
\end{equation}
Next, for $\CM=\text{K}3$, the exponents $\alpha,\gamma$ in \eqref{eq:alpha_def} and \eqref{eq:gamma_def} take the special values \cite{Chattopadhyaya:2017ews}, \cite[Eq. (2.12)]{sen2010discrete} 
\begin{equation}
    \alpha=1,\qquad \gamma=\frac{1}{M}.
\end{equation}
Noting that in the product \eqref{eq:product_PhiMN}, $k'M\equiv r'\bmod M$ and the order of $g_M$ is $M$,
we get
\begin{equation}
\begin{aligned}
\Phi_k^{(M, N)}(\tau, z, \sigma) & =\Phi_{g_M, g_N}\left(\frac{\sigma}{M}, z, M \tau\right) \\
& \overset{!}{=}\Phi_{g_M, g_N}(\tau, z, \sigma),
\end{aligned}
\end{equation}
where the last equality does not always hold. For example, it holds only when the ``relabelled'' element $g_N'=g_N$. 
Thus, we have represented the partition function as a multiplicative lift supporting the conjecture in \cite{Persson:2013xpa} for some special cases. 

\subsection{Partition function from additive lift}
As discussed above, it is not always guaranteed that the dyon partition function will admit an additive lift description. In this section, we collect some examples from \cite{Persson:2013xpa} when the dyon partition function can be expressed as an additive lift. 

\paragraph{$\boldsymbol{(M,N)=(2,2).}$} This is the case when $g_M\in 2A$ and $g_N\in 2A_2,2A_3,2A_5$, the dyon partition function is identical \cite{Persson:2013xpa} and can be identified with one of the Siegel modular forms from \cite{Gritsenko:1996tm}:
\begin{equation}
    \Phi_{2 A, 2 A_2}(\tau, z,\sigma)=\Phi_{2 A, 2 A_3}(\tau, z,\sigma)=\Phi_{2 A, 2 A_5}\left(\tau, z,\sigma\right)=[\Delta_2(\tau, z,\sigma)]^2.
\end{equation}
These functions also satisfy
\begin{equation}
    \Phi_{2 A, 2 A_2}(\tau, z,\sigma)=\Phi_{2 A, 2 A_3}\left(\frac{\sigma}{2},z, 2\tau\right), \qquad \Phi_{2 A, 2 A_5}(\tau, z,\sigma)=\Phi_{2 A, 2 A_5}\left(\frac{\sigma}{2},z, 2\tau\right).
\end{equation}
The dyon partition function in this case is a Siegel modular form of weight 4 for the paramodular group\footnote{The conjugation by $V_1$ comes from the fact that the dyon partition function in this paper is related to the multiplicative lift in \cite{Persson:2013xpa} by $\sigma\leftrightarrow\tau$, see Footnote \ref{foot:sig_tau_ex}.} $V_1\Gamma_2^+(1)V_1$ with a nontrivial multiplier $\nu_{g_M,g_N}$ specified by its value on the elements\footnote{Here we used the fact that $V_1g_1V_1=g^1$.} $g^1(0,1,-1,0)$, $g^1(1,1,0,1)$, $V_1V_2V_1$, $V_1 \, g_3(\lambda,\mu,\kappa) \, V_1$, $V_1 \, g_3(0,0,\kappa/2) \, V_1$, $\lambda,\mu,\kappa\in \Z$:
\begin{equation}\label{eq:multiplier_def}
\begin{split}
\nu_{g_M,g_N}(g^1(0,1,-1,0))=\nu_{g_M,g_N}(g^1(1,1,0,1))=-1,\qquad \nu_{g_M,g_N}(V_1V_2V_1)=1,\\ \nu_{g_M,g_N}(V_1 \, g_3(\lambda,\mu,\kappa) \, V_1) =1,\qquad \nu_{g_M,g_N}(V_1 \, g_3(0,0,\kappa/2) \, V_1) = \bold{e}^{\kappa/2}.
\end{split}
\end{equation}
Let us check that the multiplier is trivial on elements of the form \eqref{eq:inv_elements_threshold1} and \eqref{eq:inv_elements_threshold2}. $\Gamma_0(2)$, which is generated by 
\begin{equation}
    T:=\begin{pmatrix}
        1&1\\0&1
    \end{pmatrix},\qquad R:=ST^{-2}S=\begin{pmatrix}
        1&0\\2&1
    \end{pmatrix},
\end{equation}
where
\begin{equation}
    S:=\begin{pmatrix}
        0&-1\\1&0
    \end{pmatrix},
\end{equation}
contains $\Gamma_1(2)$. It is easy to see that 
\begin{equation}
    \Gamma^1(P)=S\Gamma_1(P)S^{-1}.
\end{equation}
As 
\begin{equation}
\nu_{g_M,g_N}(g^1(T^2))=\nu_{g_M,g_N}(g^1(ST^{-2}S))=1,    
\end{equation}
$\nu_{g_M,g_N}$ is trivial on $\Gamma(2,2)=\Gamma_1(2)\cap \Gamma^1(2)$. 

Next, note that 
\begin{equation}\label{eq:g1_g1_conjugation}
    g_1\begin{pmatrix}
    a&b\\cMN&d    
    \end{pmatrix}=(V_1V_MV_1) \, g^1\begin{pmatrix}
        a&bM\\cN&d
    \end{pmatrix}(V_1V_MV_1).
\end{equation}
Thus, we have 
\begin{align}
\nu_{g_M,g_N}|_{g_1(\Gamma_1(4))} &= \nu_{g_M,g_N}(V_1V_2V_1) \, \nu_{g_M,g_N}|_{g^1(\Gamma(2,2))} \,\nu_{g_M,g_N}(V_1V_2V_1) \\ &= \nu_{g_M,g_N}|_{g^1(\Gamma(2,2))}=1. 
\end{align}
Next, we note that
\begin{multline}
    V_1 \big(U_M \,g_1(a,b,cM,d) \, g^1(a,-b,-cM,d) \, U_M^{-1} \big)V_1 =\begin{pmatrix}
        d &-c&\frac{c}{M}&0
        \\
        -bM&a&0&b
        \\
        0&0&a&bM
        \\
        0&0&c&d
    \end{pmatrix}
    \\=g_4(d,-c,-bM,a) \, g_3(0,bc,bd) \, V_1 \, g_3(0,0,ac/M) \, V_1.  
\end{multline}
For 
\begin{equation}
    \gamma=\begin{pmatrix}
        a&b\\cM&d
    \end{pmatrix}\in\Gamma_1(M),\qquad \tilde{\gamma}=\begin{pmatrix}
        a&-b\\-cM&d
    \end{pmatrix}\in\Gamma_1(M),
\end{equation}
we have
\begin{equation}
\begin{split}
\nu_{g_M,g_N}(V_1(U_Mg_1(\gamma)g^1(\tilde{\gamma})U_M^{-1})V_1) &=\nu_{g_M,g_N}(V_1U_MV_1) \, \nu_{g_M,g_N}(V_1g_1(\gamma)g^1(\tilde{\gamma})V_1) \, \nu_{g_M,g_N}(V_1U_M^{-1}V_1)
\\&=\nu_{g_M,g_N}(V_1g_1(\gamma)V_1) \, \nu_{g_M,g_N}(V_1g^1(\tilde{\gamma})V_1)\\&=\nu_{g_M,g_N}(g^1(\gamma)) \, \nu_{g_M,g_N}(g_1(\tilde{\gamma})),
\end{split}    
\end{equation} where we used the fact that $V_1g^1V_1=g_1$. Next, note that 
\begin{equation}
    \nu_{g_M,g_N}(g_1(\tilde{\gamma}))=\nu_{g_M,g_N}(g_1(\gamma))^{-1}.
\end{equation}
This follows from 
\begin{equation}
    \tilde{S}=S^{-1},\qquad \tilde{T}=T^{-1},
\end{equation}
and 
\begin{equation}
    \nu_{g_M,g_N}(AB)=\nu_{g_M,g_N}(A) \, \nu_{g_M,g_N}(B)=\nu_{g_M,g_N}(BA).
\end{equation}
This implies 
\begin{equation}
\nu_{g_M,g_N}\big(V_1(U_Mg_1(\gamma)g^1(\tilde{\gamma})U_M^{-1})V_1\big)=\nu_{g_M,g_N}(g^1(\gamma)g_1(\gamma)^{-1}).    
\end{equation}
Thus, we have 
\begin{equation}
\begin{split}
\nu_{g_M,g_N}(g_4(d,-c,-bM,a))&=\nu_{g_M,g_N}(g^1(\gamma)g_1(\gamma)^{-1}) \, \nu_{g_M,g_N}(V_1 \, g_3(0,0,ac/M)V_1)^{-1} \, \nu_{g_M,g_N}(g_3(0,bc,bd))^{-1}
\\&=\bold{e}^{-ac/M} \, \nu_{g_M,g_N}(g_3(0,bc,0))^{-1} \, \nu_{g_M,g_N}(g^1(1,bd,0,1))^{-1} \, \nu_{g_M,g_N}(g^1(\gamma)g_1(\gamma)^{-1})
\\
&=\bold{e}^{-ac/M+bd/2} \, \nu_{g_M,g_N}(g^1(\gamma)g_1(\gamma)^{-1}).
\end{split}
\end{equation}
We now check that this phase is trivial on $\Gamma_0(2)$ which contains $\Gamma_1(2)$. We have 
\begin{equation}
\begin{split}
\nu_{g_M,g_N}(T))&=e^{i\pi}\nu_{g_M,g_N}(g^1(ST^2S) \, g_1(ST^2S)^{-1})
\\
&=-\nu_{g_M,g_N}(g^1(STS)^{-1})=1,
\end{split}
\end{equation}
where we used \eqref{eq:g1_g1_conjugation} with $N=1$ in the second step. Similarly,
\begin{equation}
\begin{split}
\nu_{g_M,g_N}(ST^{-2}S))&=e^{-i\pi}\nu_{g_M,g_N}(g^1(T^{-1}) \,g_1(T^{-1})^{-1})
\\
&=\nu_{g_M,g_N}(g^1(T^2))=1.    
\end{split}    
\end{equation}
Finally, for $\lambda,\mu,\kappa\in\Z$ we have 
\begin{equation}
\nu_{g_M,g_N}(g^3(\lambda ,\mu,\kappa ))=\nu_{g_M,g_N}(V_1 \, g_3(\lambda,\mu,\kappa ) V_1)=1,    
\end{equation}
and 
\begin{equation}\label{eq:trivial_mult_final}
\begin{split}
    \nu_{g_M,g_N}(g_3(\lambda M,\mu,\kappa M))&=\nu_{g_M,g_N}(V_1V_2V_1 \, g^3(\lambda ,\mu,\kappa )V_1V_2V_1)\\&=\nu_{g_M,g_N}(V_1V_2V_1) \, \nu_{g_M,g_N}(g^3(\lambda ,\mu,\kappa )) \, \nu_{g_M,g_N}(V_1V_2V_1)\\&=1.  
\end{split}  
\end{equation}
From \cite{Gritsenko:1996tm}, we know that there is an additive lift description of these Siegel modular forms. The seed function in this case is given by \cite{Persson:2013xpa}
\begin{equation}
    \psi_{g_M, g_N}(\tau, z)=-\vartheta_1(\tau, z)^2 \eta(\tau)^6,
\end{equation}
which agrees with the seed in \cite{govindarajan2011bkm} up to a minus sign. 
\paragraph{$\boldsymbol{(M,N)=(2,4)}$.} This is the case when we take $g_M\in 2A,g_N\in 2B_1,2B_2$. The multiplicative lift in these two cases are identical \cite{Persson:2013xpa} and are related to Siegel modular forms constructed by Clery and Gritsenko \cite{clery2011siegel}:
\begin{equation}
    \Phi_{2 A, 2 B_1}(\tau, z,\sigma)=\Phi_{2 A, 2 B_2}(\tau, z,\sigma)=Q_1(\tau, z,\sigma)^2.
\end{equation}
These functions also satisfy 
\begin{equation}
    \Phi_{2 A, 2 B_{1,2}}\left(V_2 \cdot \Omega\right)=\Phi_{2 A, 2 B_{1,2}}(\Omega) .
\end{equation}
The dyon partition function in this case is a Siegel modular form of weight 2 for the paramodular group $\Gamma_2^+(2)$ with multiplier $\nu_{g_M,g_N}$ given by
\begin{equation}
\begin{split}
\nu_{g_M,g_N}(g^1(1,0,2,1)) =\nu_{g_M,g_N}(g^1(1,1,0,1)) &=-1, \\ \nu_{g_M,g_N}(V_1V_2V_1) &=1,\\ \nu_{g_M,g_N}(g^3(\lambda,\mu,\kappa)) &=1,\\ \nu_{g_M,g_N}(g^3(0,0,\kappa/2)) &= \bold{e}^{\kappa/2}.
\end{split}
\end{equation}
Again one can show that the character is trivial on elements of the form \eqref{eq:inv_elements_threshold1} and \eqref{eq:inv_elements_threshold2} using similar analysis as in \eqref{eq:multiplier_def} to \eqref{eq:trivial_mult_final}.
\par
By \cite{clery2011siegel}, the dyon partition function admits additive lift description and the additive seed is given by 
\begin{equation}
    \psi_{g_M, g_N}(\tau, z)=-\frac{\vartheta_1(\tau, z)^2}{\eta(\tau)^2} \eta(2 \tau)^4,
\end{equation}
which again agrees with the seed in \cite{govindarajan2011bkm} up to a negative sign. 
\paragraph{$\boldsymbol{(M,N)=(4,4)}$.} This is the case when we take $g_M\in 4B,g_N\in 4B_4,4B_7$. The multiplicative lifts are identical \cite{Persson:2013xpa} in these two cases and can be identified with a Siegel modular form constructed by Gritsenko and Nikulin \cite{Gritsenko:1996tm}:
\begin{equation}
    \Phi_{4 B, 4 B_4}\left(\Omega\right)=\Phi_{4 B, 4 B_7}(\Omega)=\Delta_{1/2}(\Omega)^2 .
\end{equation}
It also satisfies 
\begin{equation}
    \Phi_{4 B, 4 B_4}\left(V_4 \cdot \Omega\right)=\Phi_{4 B, 4 B_7}(\Omega) .
\end{equation}
The dyon partition function is thus a Siegel modular form of weight 1 for $\Gamma_4^+(1)$ with multiplier $\nu_{g_M,g_N}$ given by
\begin{equation}
\begin{split}
\nu_{g_M,g_N}(g^1(0,-1,1,0))=\nu_{g_M,g_N}(g^1(1,1,0,1)) &=-i,\\ \nu_{g_M,g_N}(V_4) &=1,\\ \nu_{g_M,g_N}(g^3(\lambda,\mu,\kappa)) &=1, \\ \nu_{g_M,g_N}(g^3(0,0,\kappa/4)) &= \bold{e}^{\kappa/4}.
\end{split}
\end{equation}
Again one can show that the character is trivial on elements of the form \eqref{eq:inv_elements_threshold1} and \eqref{eq:inv_elements_threshold2} using similar analysis as in \eqref{eq:multiplier_def} -- \eqref{eq:trivial_mult_final}.
\par
From \cite{Gritsenko:1996tm}, the dyon partition function admits an additive lift description and the seed function is given by \cite{Persson:2013xpa}
\begin{equation}
    \psi_{g_M, g_N}(\tau, z)=-\vartheta_1(\tau, z)^2,
\end{equation}
which again matches with the seed function in \cite{govindarajan2011bkm} up to a negative sign.
\paragraph{$\boldsymbol{(M,N)=(3,3)}$.} This is the case when $g_M\in 3A,g_N\in 3A_3$. The multiplicative lift in this case can be identified with the square of the Siegel modular form $\Delta_1$ from \cite{Gritsenko:1996tm}:
\begin{equation}
    \Phi_{3 A, 3 A_3}\left(\Omega\right)=\Delta_1(\Omega)^2.
\end{equation}
It satisfies 
\begin{equation}
    \Phi_{3 A, 3 A_3}\left(V_3 \cdot \Omega\right)=\Phi_{3 A, 3 A_3}(\Omega) .
\end{equation}
The dyon partition function is thus a Siegel modular form of weight 1 for $\Gamma_3^+(1)$ with multiplier $\nu_{g_M,g_N}$ given by
\begin{equation}
\begin{split}
\nu_{g_M,g_N}(g^1(0,-1,1,0)) = \nu_{g_M,g_N}(V_3) = \nu_{g_M,g_N}(g^3(\lambda,\mu,\kappa))&=1, \\ \nu_{g_M,g_N}(g^1(1,1,0,1)) &=\bold{e}^{-1/3},\\ \nu_{g_M,g_N}(g^3(0,0,\kappa/3)) &= \bold{e}^{\kappa/3}.
\end{split}
\end{equation}
Again one can show that the character is trivial on elements of the form \eqref{eq:inv_elements_threshold1} and \eqref{eq:inv_elements_threshold2} using similar analysis as in \eqref{eq:multiplier_def} -- \eqref{eq:trivial_mult_final}.
\par
The dyon partition function admits an additive lift with seed function \cite{Persson:2013xpa}
\begin{equation}
    \psi_{g_M, g_N}(\tau, z)=-\vartheta_1(\tau, z)^2 \eta(\tau)^2.
\end{equation}
 To the best of our knowledge the above result is new and has not been reported earlier in the literature.
 
\section{Summary and discussion}\label{sec:5}
Let us summarize the results of the present paper. We studied $\Z_N$-invariant states in $\Z_M$ CHL models obtained by orbifolding type IIB string theory on $\CM\times T^2$ ($\CM=\text{K}3,T^4$) with a $\Z_M\times\Z_N$ symmetry at some point in the moduli space. These also include theories at those points in the moduli space where we have a $\Z_{MN}$ symmetry with gcd$(M,N)=1$ since $\Z_{MN}\cong\Z_M\times\Z_N$ in this case. Starting with the partition function for $\Z_N$-invariant quarter BPS dyonic states in $\Z_M$ CHL models, we explored its transformation properties. In particular, we found the invariance properties of the associated Siegel modular form $\Phi^{(M,N)}_k$. We found special generators of the subgroup $G < \rm{Sp}(4,\R)$ under which $\Phi^{(M,N)}_k$ transforms as a Siegel modular form. This is instrumental in deriving the modular properties of the degeneracy function. Next we found the Fourier-Jacobi coefficients of the partition function, which are meromorphic Jacobi forms and split into finite parts, which are mock Jacobi forms, and polar parts. From the mock Jacobi part, we extracted the degeneracy of the single-centered black holes. There are two independent Fourier decompositions of the Siegel modular form $\Phi^{(M,N)}_k$ as given in equation (\ref{eq:Fourier_Jacobi_PhiMN}). These two decompositions provide us with two distinct mock Jacobi forms. Depending on the range of charges, the degeneracies of the single-centered black holes can be found from either of the two functions. One can use the $S$-duality transformation to extend the range of charges for which the degeneracy can be computed using mock Jacobi forms. 

There are two general ways of constructing Siegel modular forms: additive lifts and multiplicative lifts. The starting point of both these is a Jacobi form called the additive or multiplicative seed function. Some Siegel modular forms admit both an additive and a multiplicative lift construction. Although the original construction of the partition function of quarter BPS states in CHL model was via additive lift, their construction by precision counting of microstates \cite{sen2008black} is most naturally expressed in terms of an infinite product. For the twisted-twined BPS states, the partition function again is given in terms of an infinite product form, see \eqref{eq:product_PhiMN}. In this paper, we described a method of expressing the partition function as an additive lift complementing and generalizing the result of \cite{govindarajan2011bkm}. To do so, we extensively made use of the multiplicative lift construction of Siegel modular forms $\Phi_{g,h}$ associated to any commuting pair $(g,h)$ of the Mathieu group \cite{Persson:2013xpa}. This construction of Siegel modular forms has the advantage that if there exists an additive lift construction of $\Phi_{g,h}$, then the additive seed is straightforward to identify. We showed that, for $\CM=\text{K}3$, the partition function $\Phi^{(M,N)}_k$ of twisted-twined quarter BPS states is equal to $\Phi_{g_M,g_N}$, where $(g_M,g_N)$ is the generator of the $\Z_M\times\Z_N$ symplectic automorphism of K3 identified with appropriate elements of the Mathieu group via Mukai's theorem, see section \ref{sec:mult_lift_part_fu} for details. This construction of partition function via multiplicative lift gives us a straightforward method to identify the additive seed, if one exists. We described the additive seeds for the cases $(M,N)=(2,2)$, $(2,4)$, $(4,4)$, $(3,3)$. The first three examples were discussed in \cite{govindarajan2011bkm}, which used the constructions of \cite{Gritsenko:1996ax,clery2011siegel} as possible candidate for the partition function. Our result matches with the additive seed of \cite{govindarajan2011bkm} up to a minus sign. 

We end the paper by discussing some open problems. An obvious next step would be to extend the analysis to the doubly twined sector and identify the charge vectors for which the single-centered degeneracy function will have the mock modular structure. We shall report on the same in future. A promising goal would be to gain further insight into the symmetry of the system, so that we have a better understanding of the partition function. In particular, the charge vectors that do not follow the mock modular structure need to be addressed appropriately. More precisely, it would be desirable to construct the generating function
\begin{equation}
F(\Omega) \coloneqq \sum_{(m,n,\ell)\in\Z^3} 
d(m,n,\ell)\, \bold{e}^{m\tau + n\sigma+\ell z},\nonumber
\end{equation}
of single-centered degeneracies $d(m,n,\ell)$ defined by \eqref{degeneracyformula}, and study its modular properties, if any. It would also be useful to generalize the constructions of \cite{Bringmann:2012zr} and \cite{rossello2024immortal} and to find the Rademacher series expansion for the coefficients of mock Jacobi forms for the general $(M,N)$ case. The final goal would be to prove Sen's positivity conjecture for the general $(M,N)$ case.   

\acknowledgments{
We thank Ashoke Sen and Suresh Govindarajan for  discussions on the project. We would also like to thank the anonymous referee for comments which improved certain parts of the paper. VB would like to thank IISER Bhopal for hospitality where part of this work was done. NB would like to acknowledge the hospitality of ICTP during the final stage of the work. The work of NB is supported by SERB POWER fellowship SPG/2022/000370. The work of RKS is supported by the US Department of Energy under grant DE-SC0010008. We thank people of India for their generous support to the advancement of basic sciences.}

\appendix
\section{Invariance groups for $\Phi^{(M,N)}_k$ from threshold integral}\label{app:threshold_integral}

In this appendix, we find certain generators of the group $G$ under which $\Phi^{(M,N)}_k$ is a Siegel modular form. To do so we use the threshold integral expression for the partition function which was given in \cite{sen2010discrete}. For $\rho\in\mathds{H}$, define,
\begin{equation}
    h_b^{(r, s ; r', s')}(\rho)=\sum_{n \in \frac{1}{M N} \mathds{Z}-\frac{b^2}{4}} c_b^{(r, s ; r', s')}(4 n) \, \bold{e}^{n \rho},
\end{equation}
where $c_b^{(r,s;r',s')}(n)$ is as in \eqref{eq:cbrsr's'def},
and,
\begin{equation}
\begin{aligned}
& \frac{1}{2} p_R^2=\frac{1}{4 \det \Img \Omega}\left|-m_1 \tau+m_2+n_1 \sigma+n_2\left(\sigma \tau-z^2\right)+j z\right|^2, \\
& \frac{1}{2} p_L^2=\frac{1}{2} p_R^2+m_1 n_1+m_2 n_2+\frac{1}{4} j^2 .
\end{aligned}
\end{equation}

The \textit{threshold integral} is then defined by 
\begin{equation}
    \widetilde{\mathcal{I}}(\Omega) = \sum_{b=0}^1 \sum_{r, s=0}^{N-1} \sum_{r^{\prime}, s^{\prime}=0}^{M-1} \widetilde{\mathcal{I}}_{r, s ; r^{\prime}, s^{\prime} ; b}(\Omega),
\end{equation}
where,
\begin{equation}\label{eq:threshold_sum}
\begin{aligned}
\widetilde{\mathcal{I}}_{r, s ; r^{\prime}, s^{\prime} ; b} (\Omega) = & \int_{\mathcal{F}} \frac{d^2 \rho}{\Img\rho}\bigg[\sum_{\substack{m_1 \in \mathds{Z}, m_2 \in \mathds{Z} / N, \\
n_1 \in \mathds{Z}+\frac{r^{\prime}}{M}, n_2 \in N \mathds{Z}-r, \\ \\ j \in 2 \mathds{Z}+b}} \bold{e}^{m_1 s^{\prime} / M} \bold{e}^{-m_2 s} \, \mathfrak{q}^{p_L^2 / 2} \overline{\mathfrak{q}}^{\, p_R^2 / 2} h_b^{(r, s ; r^{\prime}, s^{\prime})}(\rho). \\
& \hskip 3in-\delta_{b, 0} \delta_{r, 0} \delta_{r^{\prime}, 0} \, c_0^{(0, s ; 0, s^{\prime})}(0)\bigg].
\end{aligned}
\end{equation} and $\mathfrak{q} \coloneqq \bold{e}^{\rho}$, $\mathcal{F}$ is a fundamental domain of $\slz$ in $\hh$. 

Then one can show that \cite{David:2006yn,sen2010discrete}, 
\begin{equation}
    \widetilde{\mathcal{I}}(\Omega)=-2 \ln \left[(\det \Img \Omega)^{k} \, \Phi^{(M,N)}_k(\Omega)\, \overline{\Phi^{(M,N)}_k(\Omega)} \right] +\text{constant.}
\end{equation}
Suppose $g\in\mathrm{Sp}(4,\Z)$ such that $g$ leaves $\widetilde{\mathcal{I}}$ invariant. Then using the relation
\begin{equation}
    \mathrm{Im}(g\cdot\Omega)=(C\overline{\Omega}+D)^{-1}\mathrm{Im}(\Omega)(C\Omega+D)^{-1},
\end{equation}
it is easy to show that 
\begin{equation}
    \Phi^{(M,N)}_k(g\cdot\Omega)=\mathrm{det}(C\Omega+D)^k\Phi^{(M,N)}_k(\Omega),
\end{equation}
that is, $g \in G$, the subgroup of $\mathrm{Sp}(4,\Z)$ over which $\Phi^{(M,N)}_k$ is a Siegel modular form. Thus, our strategy will be to find elements of $\mathrm{Sp}(4,\Z)$ that leave $\widetilde{\mathcal{I}}(\Omega)$ invariant. To do so, we introduce the matrix
\begin{equation}\label{eq:lattice_matrix}
\mathscr{M}:=\begin{pmatrix}
0 & -m_2 & \frac{j}{2} & n_1 \\
m_2 & 0 & m_1 & -\frac{j}{2} \\
-\frac{j}{2} & -m_1 & 0 & -n_2 \\
-n_1 & \frac{j}{2} & n_2 & 0
\end{pmatrix},
\end{equation}
and 
\begin{equation}
\begin{split}
    D(\mathscr{M},\Omega)&:=-m_1 \tau+m_2+n_1 \sigma+n_2\left(\sigma \tau-z^2\right)+j z,
    \\
    \Delta(\mathscr{M})&:=m_1n_1+m_2n_2+\frac{1}{4}j^2.
\end{split}    
\end{equation}
Then it is a straightforward calculation to show that \cite{Murthy:2009dq}
\begin{equation}
\begin{split}
    D(g\mathscr{M}g^T, g\cdot\Omega)&=[\mathrm{det}(C \Omega+D)]^{-1} D(\mathscr{M}, \Omega),\\
    \Delta(g\mathscr{M}g^T)&=\Delta(\mathscr{M}),\qquad g=\begin{pmatrix}
        A&B\\C&D
    \end{pmatrix}\in\mathrm{Sp}(4,\Z).
\end{split}    
\end{equation}
We use these conditions to test potential candidates for generators of $G$. 

Let $\Lambda$ be the set of all matrices $\mathscr{M}$ of the form \eqref{eq:lattice_matrix} such that the entries of the matrix are constrained as in the indices of the sum \eqref{eq:threshold_sum}:
\begin{equation}
    \Lambda:=\left\{\begin{pmatrix}
0 & -m_2 & \frac{j}{2} & n_1 \\
m_2 & 0 & m_1 & -\frac{j}{2} \\
-\frac{j}{2} & -m_1 & 0 & -n_2 \\
-n_1 & \frac{j}{2} & n_2 & 0
\end{pmatrix}:\begin{array}{c}m_1 \in \mathds{Z}, \, m_2 \in \mathds{Z} / N, \, n_1 \in \mathds{Z}+\frac{r^{\prime}}{M} \\
n_2 \in N \mathds{Z}-r, j \in 2 \mathds{Z}+b\end{array}\right\}
\end{equation}
It is clear then that the threshold integral remains invariant under $g\in\mathrm{Sp}(4,\Z)$ if $g$ preserves $\Lambda$ as well as preserves $m_1\pmod M$ and $m_2\pmod 1$. The latter condition makes sure that the phases $\bold{e}^{m_1s'/M}$ and $\bold{e}^{-m_2s}$ in the threshold integral remain invariant. 

We now check this for the elements listed in \eqref{eq:g_123def}, one by one:
\begin{enumerate}
\item We have
\begin{equation}
    g_1(a,b,cP,d)\mathscr{M}g_1(a,b,cP,d)^t=\begin{pmatrix}
0 & -m_1 b-a m_2 & \frac{(a d-bcP) j}{2} & -n_2 b+a n_1 \\
m_1 b+a m_2 & 0 & c m_2 P+d m_1 & -\frac{j}{2} \\
\frac{-j(ad-bcP)}{2} & -c m_2 P-d m_1 & 0 & c n_1 P-d n_2 \\
n_2 b-a n_1 & \frac{j}{2} & -c n_1 P+d n_2 & 0
\end{pmatrix}.
\end{equation}
If we take $P=MN$, then $g_1$ satisfies all the constraints. 
\item We have 
\begin{equation}
g^1(a,bP,c,d)\mathscr{M}g^1(a,bP,c,d)^T=\begin{pmatrix}
0 & n_1 b P-a m_2 & \frac{j}{2} & d n_1-c m_2 \\
a m_2-n_1 b P & 0 & a m_1+n_2 b P & \frac{-j(ad-bcP)}{2} \\
-\frac{j}{2} & -a m_1-n_2 b P & 0 & -c m_1-d n_2 \\
c m_2-d n_1 & \frac{j(ad-bcP)}{2} & c m_1+d n_2 & 0
\end{pmatrix}.
\end{equation}
If we take $P=M$ and $c\in N\Z$, then all constraints are satisfied. 
\item We have 
{\tiny\begin{equation}
\begin{split}
&g_4(a,b,cP,d)\mathscr{M}g_4(a,b,cP,d)^T  =\\  &\begin{pmatrix}
0 & (-a d+b c P) m_2 & \frac{1}{2}\left(a d j+b c j P+2 b d m_1-2 a c P n_1\right) & -b\left(a j+b m_1\right)+a^2 n_1 \\
(a d-b c P) m_2 & 0 & d^2 m_1+c P\left(d j-c P n_1\right) & -\frac{1}{2} j(a d+b c P)-b d m_1+a c P n_1 \\
-\frac{1}{2} j(a d+b c P)-b d m_1+a c P n_1 & -d^2 m_1+c P\left(-d j+c P n_1\right) & 0 & (-a d+b c P) n_2 \\
b^2 m_1+a\left(b j-a n_1\right) & \frac{1}{2}\left(a d j+b c j P+2 b d m_1-2 a c P n_1\right) & (a d-b c P) n_2 & 0
\end{pmatrix}.
\end{split}
\end{equation}}
If we take $P=M$ then all constraints are satisfied. 
\item We have 
{\tiny\begin{equation}
\begin{split}
    &g_3(\lambda,\mu,\kappa)\mathscr{M}g_3(\lambda,\mu,\kappa)^T  =\\  &\begin{pmatrix}
0 & -m_2+(\kappa-\lambda \mu)n_1+\mu\left(j+\mu n_2\right) & \frac{j}{2}-\lambda n_1+\mu n_2 & n_1\\
m_2+(-\kappa+\lambda \mu) n_1-\mu\left(j+\mu n_2\right) & 0 & m_1+\lambda\left(j-\lambda n_1\right)+(\kappa+\lambda \mu) n_2 & -\frac{j}{2}+\lambda n_1-\mu n_2 \\
-\frac{j}{2}+\lambda n_1-\mu n_2 & -j \lambda-m_1+\lambda^2 n_1-(\kappa+\lambda \mu) n_2 & 0 & -n_2 \\
-n_1 & \frac{j}{2}-\lambda n_1+\mu n_2 & n_2 & 0
\end{pmatrix}.
\end{split}
\end{equation}}
If we take $\mu\in\Z$ and $\lambda,\kappa\in M\Z$, then all constraints are satisfied by $g_3$.
\item We have 
{\tiny\begin{equation}
\begin{split}
  &g^3(\lambda,\mu,\kappa)\mathscr{M}g^3(\lambda,\mu,\kappa)^T  =\\  &  \begin{pmatrix}
0 & (-\kappa+\lambda \mu) m_1-m_2+\mu\left(j+\mu n_2\right) & \frac{j}{2}+\lambda m_1+\mu n_2 & -j \lambda-\lambda^2 m_1+n_1-(\kappa+\lambda \mu) n_2 \\
(\kappa-\lambda \mu) m_1+m_2-\mu\left(j+\mu n_2\right) & 0 & m_1 & -\frac{j}{2}-\lambda m_1-\mu n_2 \\
-\frac{j}{2}-\lambda m_1-\mu n_2 & -m_1 & 0 & -n_2 \\
j \lambda+\lambda^2 m_1-n_1+(\kappa+\lambda \mu) n_2 & \frac{j}{2}+\lambda m_1+\mu n_2 & n_2 & 0
\end{pmatrix}.
\end{split}
\end{equation}}
If we take $\lambda,\mu,\kappa\in\Z$, then all constraints are satisfied by $g^3$.
\end{enumerate}

Then by the embeddings (\ref{eq:sp4_zembeddings}), we have shown that,
\begin{itemize}
    \item $g_1(\Gamma(MN)) \subset G$,
    \item $g^1(\Gamma_1(N) \cap \Gamma^1(M)) \subset G$,
    \item $g_4(\Gamma_1(M)) \subset G$,
    \item $g_3(\lambda,\mu,\kappa) \in G$ for all $\lambda, \kappa \in M\Z$ and $\mu \in \Z$, 
    \item $g^3(\lambda,\mu,\kappa) \in G$ for all $\lambda, \mu \kappa \in \Z$.
\end{itemize}

This proves theorem \ref{thm:invariancegroup_PhiMN}.

\section{Deformation of contours}
The following proposition lets us determine the charge sector for which we can deform the attractor contour $\mathcal{C}$ to contour $\mathcal{C}'$

\begin{prop}
\label{prop:deformationofcontour}
    If $(n,\ell,m) \in A$, then the twisted-twined partition function ${\Phi^{M,N}_k}^{-1}$ does not have poles in the region bounded by the attractor contour $\mathcal{C}$ and the deformed contour $\mathcal{C}'$ given by $\tau \to i \infty$.
\end{prop}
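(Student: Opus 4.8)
The plan is to identify the poles of ${\Phi^{(M,N)}_k}^{-1}$ with the zeros of $\Phi^{(M,N)}_k$, which by \eqref{eq:zeroes} lie on the divisors $D(\mathscr{M},\Omega)=0$, i.e.
\[
n_2(\sigma\tau-z^2)+jz+n_1\sigma-m_1\tau+m_2=0,
\]
subject to $m_1\in M\Z$, $m_2,n_1\in\Z$, $n_2\in N\Z$, $j\in2\Z+1$ and $m_1n_1+m_2n_2+\tfrac{j^2}{4}=\tfrac14$, and then to show that for $(n,\ell,m)\in A$ none of these meets the region swept out as $\Img(\tau)$ runs from the attractor value $2n/(M\varepsilon)$ up to $+\infty$, with $\Img(z)=-\ell/\varepsilon$ and $\Img(\sigma)=2m/\varepsilon$ held fixed and $\Rel(\tau),\Rel(z),\Rel(\sigma)$ in the fundamental box of \eqref{eq:attractorcontour}. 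First I would separate $D(\mathscr{M},\Omega)=0$ into real and imaginary parts and organize each in powers of $1/\varepsilon$, exploiting that the imaginary parts on $\mathcal{C}$ are of order $1/\varepsilon$.

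The argument then turns on whether $n_2=0$. When $n_2=0$ the imaginary part decouples from the real coordinates and fixes $\Img(\tau)=(2n_1m-j\ell)/(m_1\varepsilon)$, while the real part $jz_1+n_1\sigma_1-m_1\tau_1+m_2=0$ can be solved inside the box for a suitable $m_2\in\Z$; such a divisor is crossed exactly when $(2n_1m-j\ell)/m_1\ge 2n/M$. The relation $m_1n_1+\tfrac{j^2}{4}=\tfrac14$ forces $n_1=-s(s+1)/m_1$ with $j=2s+1$, so that integrality of $n_1$ is precisely the congruence $s(s+1)\equiv0\pmod{m_1}$; setting $r=m_1\in M\Z$ and clearing denominators, the crossing condition becomes
\[
\frac{2nr^2}{M}+r(2s+1)\ell+2s(s+1)m\le 0,
\]
which is exactly the negation of the inequality cutting out $A$ in \eqref{eq:setofcharges}. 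Hence for $(n,\ell,m)\in A$ no $n_2=0$ divisor with $m_1\neq0$ is crossed; the $m_1=0$ divisors are $\tau$-independent and therefore cannot be crossed by a pure $\Img(\tau)$-deformation, and the symmetry $(r,s)\mapsto(-r,-s-1)$ of the inequality reduces matters to $m_1>0$.

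For $n_2\neq0$ I would instead use $\Delta:=4nm/M-\ell^2>0$ directly. The leading $1/\varepsilon^2$ piece of the real part of $D(\mathscr{M},\Omega)=0$ is $n_2\big((\Img z)^2-\Img(\sigma)\,\Img(\tau)\big)=0$, which forces $\Img(\tau)=\ell^2/(2m\varepsilon)$ to leading order; since $\Delta>0$ gives $\ell^2/(2m)<2n/M$, this value lies strictly below the attractor contour, hence outside the deformation region. No $n_2\neq0$ divisor is therefore crossed, and combining the two cases establishes the proposition.

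The step I expect to be the main obstacle is the careful treatment of the boundary and the solvability of the real-part equation. One must verify that at the crossing value of $\Img(\tau)$ the real-part locus genuinely enters the fundamental box, check that the equality case (the value $1$ in \eqref{eq:setofcharges}) corresponds to a divisor grazing $\mathcal{C}$ at $\Img(\tau)=2n/(M\varepsilon)$ rather than lying strictly inside the open region, and control the subleading terms in the $n_2\neq0$ analysis so that they cannot shift a would-be crossing back above the attractor value; finally one should confirm that all sign and congruence conventions match the ranges in \eqref{eq:zeroes} and \eqref{eq:setofcharges}.
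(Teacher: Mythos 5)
Your proof is correct and follows essentially the same route as the paper: the $1/\varepsilon$ scaling of the attractor contour forces $n_2=0$ (your "locate where the quadratic divisor would sit and note it lies below the attractor value" is equivalent to the paper's "the dominant term $-n_2(\Img(\sigma)\Img(\tau)-\Img(z)^2)$ cannot vanish"), the imaginary part of the linear divisor equation then places the poles at $\Img(\tau)=(2mn_1-j\ell)/(m_1\varepsilon)$, and after substituting $n_1=-s(s+1)/m_1$, $j=2s+1$, $r=m_1$ your crossing inequality is algebraically identical to the condition defining $A$. The only differences are minor refinements in your favor: you explicitly dispose of the $m_1=0$ divisors (the paper implicitly divides by $m_1$) by noting they are $\tau$-independent and hence never crossed by a pure $\Img(\tau)$ deformation, and you flag the equality/grazing case, which the paper handles in a footnote by observing that the residue vanishes when the attractor point lies exactly on the wall.
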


\begin{proof}
For small $\varepsilon$, as is the case for the attractor contour, the dominant contribution to the LHS of the pole equation (\ref{eq:zeroes}) is given by the term $-n_2 (\Img(\sigma) \Img(\tau) - \Img(z)^2)$. The term inside the parenthesis is always positive on $\hh_2$, ensuring that the coefficient of $n_2$ does not vanish. The above is true for the entire region between $\mathcal{C}$ and $\mathcal{C}'$. Therefore, we must set $n_2=0$ to satisfy the pole equation. Thus, we have the simplified condition,
\begin{equation}
\label{eq:polesatinfinity}
    n_1 \sigma - m_1 \tau + jz + m_2 = 0,
\end{equation} where,
\begin{equation}
    m_1 \in M \Z, \qquad n_1, m_2 \in \Z, \qquad j \in 2\Z + 1, \qquad m_1 n_1 + \frac{j^2}{4} = \frac{1}{4}.
\end{equation} 
As $n_2$ carried the information about the twisting in the pole structure, setting it to zero in the region of interest implies that the pole structure for our discussion is the same as that of the un-twisted case, and the proof of the analogous theorem in \cite{bhand2023mock} follows through. However, for completeness, we discuss the rest of the proof here too.

The imaginary parts of the integration variables is fixed on the contour. So from equation (\ref{eq:polesatinfinity}) we find,
\begin{equation}
    n_1 \Img(\sigma) - m_1 \Img(\tau) + j\Img(z) = 0,
\end{equation} which implies,
\begin{equation}
    \Img(\tau) = \frac{2 m n_1 - j \ell}{m_1 \varepsilon}.
\end{equation} The above describes the locations of the poles, so to ensure that there are no poles between $\mathcal{C}$ and $\mathcal{C}'$, we demand,
\begin{equation}
    \frac{2 m n_1 - j \ell}{m_1 \varepsilon} \leq \frac{2n}{M\varepsilon},
\end{equation} where the RHS is the value of the imaginary part of $\tau$ for the attractor contour.\footnote{For the equality, ie, the attractor point lying on the wall of marginal stability, the jump in the index on crossing the wall vanishes, ie, the residue vanishes. This is shown in \cite[Appendix B.2]{bhand2023mock}} This simplifies to the expression,
\begin{align}
    \bigg(j + \ell\frac{m_1}{m}\bigg)^2 + \frac{m_1^2}{m^2}\bigg(\frac{4nm}{M} - \ell^2 \bigg) \geq 1,
\end{align} which is true for all $m_1 \in M\Z$, $j \in 2\Z + 1$, $m_1 |(j^2 -1)/4$. This can be written as,
\begin{align}
    \min_{r \in M\Z, \, s\in \Z, \, s(s+1) = 0 \, (\mathrm{mod} \, {r})}\bigg[\bigg(\frac{r\ell}{m} + 2s + 1\bigg)^2 + \frac{r^2}{m^2}\bigg(\frac{4nm}{M}-\ell^2 \bigg)\bigg] \geq 1,
\end{align} where we have reparametrized $m_1 = r$, and $j = 2s+1$ with $s \in \Z$. Furthermore, single centered black holes exist only for charges $(\vec{Q}, \vec{P})$ such that $Q^2 P^2 - (Q \cdot P)^2 > 0$, which translates to
\begin{equation}
    \frac{4nm}{M} - \ell^2 > 0.
\end{equation} This proves the theorem.
\end{proof}

\begin{prop}
\label{prop:deformationofcontour2}
    If $(n,\ell,m) \in B$, then the twisted-twined partition function ${\Phi^{M,N}_k}^{-1}$ does not have poles in the region bounded by the attractor contour $\mathcal{C}$ and the deformed contour $\mathcal{C}'$ given by $\sigma \to i \infty$.
\end{prop}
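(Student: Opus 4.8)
The plan is to mirror the proof of Proposition~\ref{prop:deformationofcontour}, interchanging the roles of $\tau$ and $\sigma$. First I would observe that for small $\varepsilon$ the dominant term on the left-hand side of the zero-locus equation~\eqref{eq:zeroes} is $-n_2\big(\Img(\sigma)\Img(\tau) - \Img(z)^2\big)$, whose coefficient equals $\det(\Img\Omega) > 0$ on all of $\hh_2$. This stays true throughout the region swept out as $\Img(\sigma) \to \infty$ with $\Img(\tau),\Img(z)$ held fixed, so any pole in that region must have $n_2 = 0$. As in Proposition~\ref{prop:deformationofcontour}, once $n_2 = 0$ the twisting drops out of the pole structure and the zero-locus equation reduces to
\[
n_1\sigma - m_1\tau + jz + m_2 = 0, \qquad m_1 \in M\Z,\quad n_1, m_2 \in \Z,\quad j \in 2\Z+1,\quad m_1 n_1 + \tfrac{j^2}{4} = \tfrac{1}{4}.
\]

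Next I would take imaginary parts and insert the attractor values $\Img(\tau) = 2n/(M\varepsilon)$, $\Img(z) = -\ell/\varepsilon$, $\Img(\sigma) = 2m/\varepsilon$ obtained from~\eqref{eq:attractorcontour} with $P^2 = 2n$, $Q^2 = 2m/M$, $Q\cdot P = \ell$. Solving the reduced equation for the $\Img(\sigma)$-coordinate of a pole gives $\Img(\sigma)_{\text{pole}} = \big(2nm_1/M + j\ell\big)/(n_1\varepsilon)$. To keep the region between $\mathcal{C}$ (where $\Img(\sigma) = 2m/\varepsilon$) and $\mathcal{C}'$ (where $\Img(\sigma)\to\infty$) free of poles, I would demand $\Img(\sigma)_{\text{pole}} \le 2m/\varepsilon$ for every admissible pole.

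Eliminating $m_1$ via $m_1 n_1 = (1-j^2)/4$ and completing the square in $j$ then turns this inequality into
\[
\bigg(j - \frac{Mn_1\ell}{n}\bigg)^2 + \frac{M^2 n_1^2}{n^2}\bigg(\frac{4nm}{M} - \ell^2\bigg) \ge 1 .
\]
Writing $r = Mn_1 \in M\Z$ and $j = 2s+1$ reproduces exactly the defining inequality of $B$ in~\eqref{eq:setofcharges}; the congruence $s(s+1) \equiv 0 \pmod{r/M}$ is just $n_1 \mid s(s+1) = (j^2-1)/4$, which holds automatically since $m_1 n_1 = -s(s+1)$ with $m_1\in\Z$. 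Because the genuine poles additionally obey $m_1 \in M\Z$, i.e. $r \mid s(s+1)$, they form a subset of the index set over which $B$ minimizes, so membership in $B$ forces the minimum over the true poles to be $\ge 1$ as well. Finally the single-centered constraint $Q^2P^2 - (Q\cdot P)^2 > 0$ gives $4nm/M - \ell^2 > 0$, completing the description of $B$.

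The step I expect to demand the most care is the bookkeeping of signs and divisibilities. The inequality $\Img(\sigma)_{\text{pole}} \le 2m/\varepsilon$ as written presupposes $n_1 > 0$; poles with $n_1 < 0$ reverse it, and one must verify that the symmetry $(n_1,s)\mapsto(-n_1,-s-1)$ together with the minimum over all $r\in M\Z$, $s\in\Z$ in~\eqref{eq:setofcharges} captures every sign sector. One must likewise confirm that the congruence in $B$ (taken modulo $r/M$ rather than modulo $r$) enumerates a set containing all physical poles, so that the resulting bound is genuinely valid. Everything else is the same completion-of-the-square computation already carried out for Proposition~\ref{prop:deformationofcontour}.
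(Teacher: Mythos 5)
Your proof is correct and follows essentially the same route as the paper's: set $n_2=0$ by the dominant-term argument, take imaginary parts on the attractor contour, demand every pole height satisfy $\Img(\sigma)_{\text{pole}} \le 2m/\varepsilon$, and complete the square to land on the defining inequality of $B$. If anything you are more explicit than the paper (which simply declares the proof ``analogous'' and writes down the resulting inequalities): your bookkeeping of why genuine poles satisfy $r \mid s(s+1)$ and hence lie inside the $\bmod\ r/M$ index set of $B$ is a genuine clarification, and your worry about the sign of $n_1$ is harmless, since passing from the ratio inequality to the squared one only requires multiplying by $4n_1^2>0$.
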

\begin{proof}
    The proof is analogous to the above proposition. $n_2$ has to still be set to zero for the same reason as before. The condition for there to be no poles between $\mathcal{C}$ and $\mathcal{C}''$ turns out to be,
\begin{equation}
    \frac{m_1 + j \frac{M\ell}{2n}}{n_1} \leq \frac{Mm}{n},
\end{equation} which simplifies to,
\begin{align}
    \bigg(j + \ell\frac{n_1 M}{n}\bigg)^2 + \bigg(\frac{n_1 M}{n} \bigg)^2\bigg(\frac{4nm}{M} - \ell^2 \bigg) \geq 1.
\end{align} Demanding that this is true for all $j$ and $n_1$ gives us the condition,
\begin{equation}
    \min_{r \in M\Z, \, s \in \Z, \, s(s+1) = 0 \, (\mathrm{mod} \, {r/M})}\bigg[\bigg(\frac{r\ell}{n} - 2s - 1\bigg)^2 + \frac{r^2}{n^2}\bigg(\frac{4nm}{M}-\ell^2 \bigg)\bigg] \geq 1.
\end{equation}
\end{proof}

\section{The Mathieu group}
In this appendix, we collect some information about the Mathieu group relevant to the analysis in section \ref{sec:4}. 

The Mathieu group $\IM_{24}$ is one of the sporadic finite simple groups of order 244823040 \cite{ATLAS}. It is the automorphism group of the Golay code \cite{Conway:1988oqe}, that is, a subgroup of the permutation group $S_{24}$ which preserves the Golay code. It has a subgroup $\IM_{23}\subset\IM_{24}$ of order 10200960 which fixes the 24th letter and permutes the first 23 letters \cite{Conway:1988oqe}. 

Recall that a permutation on 24 letters has cycle structure 
\begin{equation}
    1^{n_1}2^{n_2}\cdots 24^{n_{24}}, \qquad \sum_{i=1}^{24}i \cdot n_i=24,
\end{equation}
if the permutation is a product of $n_i$ $i$-cycles. The cycle structure is an invariant of a conjugacy class. $\IM_{24}$ has 27 conjugacy classes which are labeled by the cycle structure of a representative. We list the conjugacy classes in Table \ref{tab:M24_conj_class} below. 

Now, given a commuting pair $g,h\in\IM_{24}$, clearly $h\in C_{\IM_{24}}(g)$. Thus, we can talk about the conjugacy class of $h$ in $C_{\IM_{24}}$. We list the $C_{\IM_{24}}$--classes of $h$ when $g\in 2A$, $3A$, $4B$ in table \ref{tab:conj_class2A}, \ref{tab:conj_class4B} and \ref{tab:conj_class3A} respectively, which are used in section \ref{sec:4}. The rest can be found in \cite[Appendix D]{Gaberdiel:2012gf}. 

We also introduce the notation:
\begin{itemize}
    \item $G=N\cdot Q$: group $G$ containing a normal subgroup $N$ and such that $G / N \cong Q$.
\item $N \rtimes Q $: (left) semidirect product of $N$ and $Q$.
\item $A_n$: group of even permutations (alternating group) of $n$ letters.
\item $\mathrm{PSL}_n(q)$: projective special linear group over the $n$-dimensional vector space over the finite field $\mathds{F}_q$.
\end{itemize}

\begin{table}[H]
\centering
\begin{tabular}{|c|c|c|l|}
\hline
Class & Order & Number of Elements & Cycle Structure \\
\hline
1A  & 1   & 1          & Identity \\
2A  & 2   & 11,385     & $1^2 \cdot 2^8$ \\
2B  & 2   & 31,878     & $2^{12}$ \\
3A  & 3   & 226,688    & $1^6 \cdot 3^6$ \\
3B  & 3   & 485,760    & $3^8$ \\
4A  & 4   & 637,560    & $2^4 \cdot 4^4$ \\
4B  & 4   & 1,912,680  & $1^4 \cdot 2^2 \cdot 4^4$ \\
4C  & 4   & 2,550,240  & $4^6$ \\
5A  & 5   & 4,080,384  & $1^4 \cdot 5^4$ \\
6A  & 6   & 10,200,960 & $1^2 \cdot 2^2 \cdot 3^2 \cdot 6^2$ \\
6B  & 6   & 10,200,960 & $6^4$ \\
7A  & 7   & 5,829,120  & $1^3 \cdot 7^3$ \\
7B  & 7   & 5,829,120  & $1^3 \cdot 7^3$ \\
8A  & 8   & 15,301,440 & $1^2 \cdot 2 \cdot 4 \cdot 8^2$ \\
10A & 10  & 12,241,152 & $2^2 \cdot 10^2$ \\
11A & 11  & 22,256,640 & $1^2 \cdot 11^2$ \\
12A & 12  & 20,401,920 & $2 \cdot 4 \cdot 6 \cdot 12$ \\
12B & 12  & 20,401,920 & $12^2$ \\
14A & 14  & 17,487,360 & $1 \cdot 2 \cdot 7 \cdot 14$ \\
14B & 14  & 17,487,360 & $1 \cdot 2 \cdot 7 \cdot 14$ \\
15A & 15  & 16,321,536 & $1 \cdot 3 \cdot 5 \cdot 15$ \\
15B & 15  & 16,321,536 & $1 \cdot 3 \cdot 5 \cdot 15$ \\
21A & 21  & 11,658,240 & $3 \cdot 21$ \\
21B & 21  & 11,658,240 & $3 \cdot 21$ \\
23A & 23  & 10,644,480 & $1 \cdot 23$ \\
23B & 23  & 10,644,480 & $1 \cdot 23$ \\
24A & 24  & 20,401,920 & $1 \cdot 2 \cdot 3 \cdot 4 \cdot 6 \cdot 8 \cdot 12 \cdot 24$ \\
\hline
\end{tabular}
\caption{Conjugacy classes of the Mathieu group \( \IM_{24} \). The notations 2A,2B,...  for the conjugacy class comes from the largest $n$ such that the cycle structure has an $n$-cycle.}
\label{tab:M24_conj_class}
\end{table}

\begin{table}[H]
\centering
\begin{tabular}{|c|c|l|}
\hline
Class & Order of Centralizer \( C_{\IM_{24}}(g) \) & Group Structure of Centralizer \\
\hline
2A & 21504  & \( \mathds{Z}_2^4 \cdot\left(\mathds{Z}_2^3 \rtimes \mathrm{P S L}_3(2)\right) \) \\
3A & 1080   & \( \mathds{Z}_3 \cdot A_{6} \) \\
4B & 128   & \( \left(\left(\mathds{Z}_4 \times \mathds{Z}_4\right) \rtimes \mathds{Z}_4\right) \rtimes \mathds{Z}_2 \)  \\
\hline
\end{tabular}
\caption{Centralizers \( C_{\IM_{24}}(g) \) for representatives \( g \in 2A, 3A, 4B \) in \( \IM_{24} \). The result is taken from \cite[Appendix D]{Gaberdiel:2012gf}.}
\label{tab:cent_2A3A4B}
\end{table}

% We list the structure of the conjugacy classes $C_{\IM_{24}}(g)$ for $g\in 2A,3A,4B$ in Table \ref{tab:cent_2A3A4B}. The conjugacy classes of the group $C_{\IM_{24}}(g)$ for $g\in 2A,3A,4B$ can be found using the structure of these groups given in Table \ref{tab:cent_2A3A4B} and has been found in \cite[Appendix D]{Gaberdiel:2012gf}. Let us list the conjugacy classes appearing in section \ref{sec:4}.
% \begin{enumerate}
% \item $2A_2,2A_3,2A_5$ are conjugacy classes of $C_{\IM_{24}}(g)$ for $g\in 2A$. 
% \item 
% \end{enumerate}

\begin{table}[H]
\centering
\begin{minipage}[t]{0.48\textwidth}
\centering
\begin{tabular}{|c|c|}
\hline
Order & Conjugacy Class \\
\hline
21504 & $1A$ \\
21504 & $2A_1$ \\
1536  & $2A_2$ \\
1536  & $2A_3$ \\
384   & $4A_1$ \\
512   & $2B_1$ \\
256   & $2A_4$ \\
256   & $2B_3$ \\
64    & $4B_2$ \\
128   & $2A_5$ \\
128   & $2B_2$ \\
64    & $4A_2$ \\
128   & $4B_1$ \\
64    & $4A_3$ \\
32    & $4C_1$ \\
64    & $4B_3$ \\
32    & $4B_4$ \\
32    & $4A_4$ \\
16    & $4B_5$ \\
16    & $4C_2$ \\
16    & $8A_1$ \\
24    & $3A_1$ \\
24    & $6A_1$ \\
12    & $6A_2$ \\
12    & $6A_3$ \\
12    & $12A_1$ \\
14    & $7A_1$ \\
14    & $14A_1$ \\
14    & $7B_1$ \\
14    & $14B_1$ \\
\hline
\end{tabular}
\caption{Conjugacy classes of $C_{\IM_{24}}(g)$ for $g\in 2A$. The result is taken from \cite[Appendix D]{Gaberdiel:2012gf}.
\label{tab:conj_class2A}}
\end{minipage}%
\hfill
\begin{minipage}[t]{0.48\textwidth}
\centering
\begin{tabular}{|c|c|}
\hline
Order & Conjugacy Class \\
\hline
128 & $1A_1$ \\
16  & $8A_1$ \\
128 & $4B_2$ \\
128 & $4B_6$ \\
64  & $2B_3$ \\
64  & $4B_1$ \\
128 & $2A_1$ \\
16  & $2B_1$ \\
16  & $4B_3$ \\
16  & $2A_3$ \\
16  & $4B_5$ \\
32  & $2B_2$ \\
16  & $8A_2$ \\
16  & $8A_3$ \\
32  & $4A_2$ \\
32  & $4A_1$ \\
32  & $2A_4$ \\
16  & $4B_7$ \\
16  & $4B_4$ \\
16  & $4A_3$ \\
16  & $4A_4$ \\
64  & $2A_2$ \\
16  & $8A_4$ \\
64  & $4B_9$ \\
64  & $2A_5$ \\
64  & $4B_8$ \\
\hline
\end{tabular}
\caption{Conjugacy classes of $C_{\IM_{24}}(g)$ for $g\in 4B$. The result is taken from \cite[Appendix D]{Gaberdiel:2012gf}.
\label{tab:conj_class4B}}
\end{minipage}
\end{table}
\begin{table}[H]
\centering
\resizebox{\textwidth}{!}{
\begin{tabular}{|r|r|r|r|r|r|r|r|r|r|r|r|r|r|r|r|r|r|}
\hline
Order & 1080 & 1080 & 1080 & 24 & 24 & 24 & 9 & 9 & 12 & 12 & 12 & 15 & 15 & 15 & 15 & 15 & 15 \\
\hline
Conjugacy Class & $1A$ & $3A_1$ & $3A_2$ & $6A_1$ & $6A_2$ & $2A_1$ & $3A_3$ & $3B_1$ & $12A_1$ & $12A_2$ & $4A_1$ & $15A_1$ & $5A_1$ & $15B_1$ & $5A_2$ & $15A_2$ & $15B_2$ \\
\hline
\end{tabular}
}
\caption{Conjugacy classes of $C_{\IM_{24}}(g)$ for $g\in 3A$. The result is taken from \cite[Appendix D]{Gaberdiel:2012gf}.
}
\label{tab:conj_class3A}
\end{table}

\bibliographystyle{unsrt}
\bibliography{citations.bib}	
\end{document}